\documentclass{llncs}

\usepackage{amsmath}
\usepackage[T1]{fontenc}
\usepackage{amsfonts}
\usepackage{amssymb}
\usepackage{float}
\usepackage{url}
\usepackage{tikz}
\usetikzlibrary{automata}
\usetikzlibrary{arrows}
\usetikzlibrary{shapes}
\usetikzlibrary{decorations.pathmorphing}
\usepackage{algorithm}
\usepackage{algorithmic}
\usepackage[margin=1in]{geometry}

\pagestyle{plain}

\tikzstyle{every picture}=[
  >=stealth', shorten >=1pt, node distance=1.44cm,auto,bend angle=45,initial text=,
  every state/.style={inner sep=0.75mm, minimum size=1mm},font=\scriptsize,
]

\newcommand{\cqfd}{\hfill{\vrule height 3pt width 5pt depth 2pt}}

\bibliographystyle{splncs_srt}

\begin{document} 

  \title{Root-Weighted Tree Automata and their Applications to Tree Kernels}
  
  \author{
    Ludovic Mignot,
    Nadia Ouali-Sebti,
    Djelloul Ziadi
  } 

  \institute{
    LITIS, Universit\'e de Rouen, 76801 Saint-\'Etienne du Rouvray Cedex, France\\
     \email{\{ludovic.mignot, nadia.ouali-sebti, djelloul.ziadi\}@univ-rouen.fr}\\
  }
  
  \maketitle
  
  
  \begin{abstract} 
    In this paper, we define a new kind of weighted tree automata where the weights are only supported by final states.
    We show that these automata are sequentializable and we study their closures under classical regular and algebraic operations.
    
    We then use these automata to compute the subtree kernel of two finite tree languages in an efficient way.
    Finally, we present some perspectives involving the root-weighted tree automata.
  \end{abstract}

\section{Introduction}\label{se:int}

  Kernel methods have been widely used to extend the applicability of many well-known algorithms, such as the Perceptron~\cite{aizerman65}, Support Vector Machines~\cite{cortes95}, or Principal Component Analysis~\cite{zelenkoAR03}
  .
  Tree kernels are interesting 
  approaches
  in areas of machine learning based natural language processing. 
  They have been applied to reduce such effort for several natural language tasks, 
  \emph{e.g.}
  relation extraction~\cite{icann97}, syntactic parsing re-ranking~\cite{CollinsDuffy}, named entity recognition\cite{culottaS04,cumbyR03} and Semantic Role Labeling~\cite{moschitti04}.

  The main idea of tree kernels is to compute the number of common substructures (subtrees and subset trees) between two trees $t_1$ and $t_2$. 
  In~\cite{moschitti06}, Moschitti defined an algorithm for the computation of this type 
  of
  tree kernels which computes the kernels between two syntactic parse trees in $O(m \times n)$ time, where $m$ and $n$ are the number of nodes in the two trees.
  To do this, Moschitti modified the function proposed by Collins and Duffy in~\cite{CollinsDuffy} 
  by
  introducing a parameter $\sigma\in\{0,1\}$ which enables the SubTrees ($\sigma=1$) or the SubSet Trees ($\sigma=0$) evaluation and which is defined for two 
  trees
  $t_1$ and $t_2$ as follows: 
  Given a set of substructures ${\cal S}=\{s_1,s_2,\ldots\}$, they defined the indicator function $I_i(n)$ which is equal 
  to
  $1$ if the substructure $s_i$ is rooted at node $n$ and $0$ otherwise.
  They defined the tree kernel function between the two trees $t_1$ and $t_2$ as follows: 
  $K(t_1,t_2)=\sum_{n_{1} \in N_{t_1}}\sum_{n_{2} \in N_{t_2}}\Delta(n_{1},n_{2})$ where $N_{t_1}$ and $N_{t_2}$ are the number of nodes in $t_1$ and $t_2$ respectively and $\Delta (n_{1},n_{2}) = \sum_{i=1}^{|{\cal S}|} I_{i}(n_{1})\cdot I_{i}(n_{2})$. 
  We can then compute $\Delta$ as follows:
  \begin{itemize}
    \item if the productions at $n_{1}$ and $n_{2}$ are 
    different
    then $\Delta (n_{1},n_{2}) = 0$, 
    \item if the productions at $n_{1}$ and  $n_{2}$ are the same and $n_{1}$ and $n_{2}$ are leaves then $\Delta (n_{1},n_{2}) = 1$,
    \item if the productions at $n_{1}$ and  $n_{2}$ are the same and $n_{1}$ and $n_{2}$ are not leaves then $\Delta (n_{1},n_{2}) = \prod_{j=1}^{nc(n_{1})}(\sigma+  \Delta (C_{n_{1}}^{j}, C_{n_{2}}^{j}))$,  where $nc(n_1)$ is the number of children of $n_1$ and $C_{n}^{j}$ is $j^{th}$ child of the node $n$. 
  \end{itemize}
  
  In~\cite{moschitti2006}, Moschitti proposed a new convolution kernel, namely the Partial Tree kernel, to fully exploit dependency trees. 
  He proposed an efficient algorithm for its computation which is based on  applying the selection of tree nodes with non-null kernel.
  In the following we propose a new 
  technique
  to compute these 
  kind
  of tree kernels using weighted tree automata. 
  We will 
  start
  by defining a new class of weighted tree automata that we call 
  \emph{rooted weighted tree automata}
  and we will prove some properties of 
  these
  weighted tree automata. 
  Then we will show that tree kernels can be computed efficiently using a general intersection of rooted weighted tree automata defined here. 
  
  The paper is organized as follows. 
  In the following section, we introduce the trees, operations in trees and in tree languages and some preliminary notions used in the remaining sections. 
  Section~\ref{sec tree ser rwta}, 
  presents 
  the
  sequentialization of rooted weighted tree automata and the closure of these automata under 
  rational
  or algebraic operations.
  In Section~\ref{sec subtree Ker} we present an efficient computation of the subtree kernel of two finite tree series. 
  Finally, the different results described in this paper are given in the conclusion.  

\section{Preliminaries}\label{sec prelim} 

  Let $\Sigma$ be a graded alphabet.
  A \emph{tree} $t$ over $\Sigma$ is inductively defined $t=f(t_1,\ldots,t_k)$ where $k$ is any integer, $f$ is any symbol in $\Sigma_k$ and $t_1,\ldots,t_k$ are any $k$ trees over $\Sigma$. 
  We denote by $T_{\Sigma}$ the set of trees over $\Sigma$.
  A \emph{tree language} over $\Sigma$ is a subset of $T_\Sigma$.
  
  Let $c$ be a symbol in $\Sigma_0$, $L$ be a tree language over $\Sigma$ and $t$ be a tree in $T_\Sigma$.
  The \emph{tree substitution} of $c$ by $L $ in $t$, denoted by $t_{\{c \leftarrow L\}}$, is the language inductively defined by:
  \begin{itemize}
    \item $L$ if $t=c$;
    \item $\{d\}$ if $t=d \in \Sigma_0\setminus\{c\}$;
    \item $f({t_1}_{\{c \leftarrow L\}}, \ldots, {t_k}_{\{c \leftarrow L\}})$ if $t=f(t_1, \ldots, t_k)$ with $f\in\Sigma_k$ and $t_1, \dots, t_k$ any $k$ trees over $\Sigma$.
  \end{itemize}
  
  \noindent The $c$-\emph{product} $L_1\cdot_{c} L_2$ of two tree languages $L_1$ and $L_2$ over $\Sigma$ is the tree language $L_1\cdot_{c} L_2$ defined by $\bigcup_{t\in L_1}t_{\{c \leftarrow L_2\}}$.
   The \emph{iterated $c$-product} of a tree language $L$ over $\Sigma$ is the tree language $L^{n_c}$ recursively defined by:
   \begin{itemize}
     \item $L^{0_c}=\{c\}$,
     \item $L^{{(n+1)}_c}=L^{n_c}\cup L\cdot_{c} L^{n_c}$.
   \end{itemize}
   The $c$-\emph{closure} of the tree language $L$ is the language $L^{*_c}$ defined by $\bigcup_{n\geq 0} L^{n_c}$.
   
   In the following, we make use of weighted tree automata in order to compute tree kernels.
   See~\cite{tata} for details about classical tree automata. 
     
   Let $t$ be a tree over an alphabet $\Sigma$.   
   The tree $t^\sharp$ is obtained by indexing the symbols of $t$ by its position in a prefix course.
   We denote by $\Sigma_{t^\sharp}$ the set of the indexed symbols that appears in $t^\sharp$.
   The function $\mathrm{h}$ is the dual function, which drops the indexes ($\mathrm{h}(t^\sharp)=t$).
   Notice that the function $\mathrm{h}$ defines an equivalence relation over $T_{\Sigma_{t^\sharp}}$.
   Indeed, let $t_1$ and $t_2$ be two trees in $T_{\Sigma_{t^\sharp}}$.
   We define the relation $\sim_\mathrm{h}$ by $t_1\sim_\mathrm{h} t_2$ $\Leftrightarrow$ $\mathrm{h}(t_1)=\mathrm{h}(t_2)$.
   Since $\sim_\mathrm{h}$ is a relation based on the equality of images by $\mathrm{h}$, it can be shown that
  
   \begin{lemma}
     The relation $\sim_\mathrm{h}$ is an equivalence relation.
   \end{lemma}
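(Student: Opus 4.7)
The plan is to verify the three defining properties of an equivalence relation (reflexivity, symmetry, transitivity) directly from the definition, exploiting the fact that $\sim_\mathrm{h}$ is the pullback of the equality relation on $T_\Sigma$ along the function $\mathrm{h}$. Since the paper itself remarks that $\sim_\mathrm{h}$ is ``a relation based on the equality of images by $\mathrm{h}$,'' the whole argument reduces to transferring the equivalence-relation properties of $=$ through $\mathrm{h}$.

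Concretely, I would take arbitrary trees $t_1, t_2, t_3 \in T_{\Sigma_{t^\sharp}}$ and argue as follows. For reflexivity, observe that $\mathrm{h}(t_1) = \mathrm{h}(t_1)$ holds because $=$ is reflexive on $T_\Sigma$, hence $t_1 \sim_\mathrm{h} t_1$. For symmetry, suppose $t_1 \sim_\mathrm{h} t_2$, i.e.\ $\mathrm{h}(t_1) = \mathrm{h}(t_2)$; then by symmetry of $=$ we have $\mathrm{h}(t_2) = \mathrm{h}(t_1)$, so $t_2 \sim_\mathrm{h} t_1$. For transitivity, assume $t_1 \sim_\mathrm{h} t_2$ and $t_2 \sim_\mathrm{h} t_3$, so $\mathrm{h}(t_1) = \mathrm{h}(t_2)$ and $\mathrm{h}(t_2) = \mathrm{h}(t_3)$; transitivity of $=$ then yields $\mathrm{h}(t_1) = \mathrm{h}(t_3)$, i.e.\ $t_1 \sim_\mathrm{h} t_3$.

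There is no real obstacle here: the lemma is essentially the standard observation that for any function $f \colon X \to Y$, the kernel relation $\{(x,x') : f(x) = f(x')\}$ is always an equivalence relation on $X$. Applied to $f = \mathrm{h}$ and $X = T_{\Sigma_{t^\sharp}}$, this gives the result. The only conceivable subtlety is ensuring that $\mathrm{h}$ is well-defined as a function on $T_{\Sigma_{t^\sharp}}$ (so that equality of images is meaningful), but this follows from its inductive definition as the index-erasing map, which is already implicit in the excerpt.
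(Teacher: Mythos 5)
Your proof is correct and follows exactly the route the paper intends: the paper simply remarks that $\sim_\mathrm{h}$ is defined by equality of images under $\mathrm{h}$ and leaves the standard kernel-of-a-function argument implicit, which is precisely what you spell out via reflexivity, symmetry, and transitivity of $=$. Nothing is missing.
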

   
  Let $\Sigma$ be an alphabet and $t=f(t_1,\ldots,t_k)$ be a tree in $T_{\Sigma}$.
  
  The set $\mathrm{SubTree}(t)$ is the set inductively defined by $\mathrm{SubTree}(t)= \{t\}\cup \bigcup_{1\leq j\leq k} \mathrm{SubTree}(t_j)$.
  Let $L$ be a tree language over $\Sigma$.
  The set $\mathrm{SubTreeSet}(L)$ is the set defined by $\mathrm{SubTreeSet}(L)=\bigcup_{t\in L}\mathrm{SubTree}(t)$.
    
  The formal tree series $\mathrm{SubTreeSeries}_t$ is the tree series over $\mathbb{N}$ inductively defined by $\mathrm{SubTreeSeries}_t=t+\sum_{1\leq j \leq k} \mathrm{SubTreeSeries}_{t_j}$.
  Let $L$ be a finite tree language.
  The formal tree series $\mathrm{SubTreeSeries}_L$ is the tree series over $\mathbb{N}$ defined by $\mathrm{SubTreeSeries}_L=\sum_{t'\in L} \mathrm{SubTreeSeries}_{t'}$. 
  Let us notice that if $L$ is not finite, since $\Sigma$ is a finite set of symbols, there exists a tree $t$ in $\Sigma_0$ that appears an infinite times as a subtree in $L$; thus $\mathrm{SubTreeSeries}_L$ is a tree series over $\mathbb{N}\cup\{+\infty\}$.
  
  \begin{definition}
    Let $L_1$ and $L_2$ be two finite tree languages.
    The \emph{subtree series kernel} of $L_1$ and $L_2$ is the integer $\mathrm{KerSeries}(L_1,L_2)$ defined by:
    
    \centerline{
      $\mathrm{KerSeries}(L_1,L_2)=\sum_{t\in T_\Sigma}(\mathrm{SubTreeSeries}_{L_1}\times \mathrm{SubTreeSeries}_{L_2})(t)$.
    }
  \end{definition}
  
  \begin{example}\label{ex subset tree}
    Let $\Sigma$ be the graded alphabet defined by $\Sigma_0=\{a,b\}$, $\Sigma_1=\{h\}$ and $\Sigma_2=\{f\}$.
    Let us consider the trees $t_1=f(h(a),f(h(a),b))$, $t_2=f(h(a),h(b))$ and $t_3=f(f(b,h(b)),f(h(a),h(b)))$.
    Then it can be shown that:
    \begin{itemize}
      \item $\mathrm{SubTree}(t_1)=\{t_1,f(h(a),b),h(a),a,b\}$
      \item $\mathrm{SubTree}(t_2)=\{t_2,h(a),h(b),a,b\}$
      \item $\mathrm{SubTreeSeries}_{t_1}=\mathbb{P}_{t_1}=t+f(h(a),b)+2h(a)+2a+b$
      \item $\mathrm{SubTreeSeries}_{t_2}=\mathbb{P}_{t_2}=t_2+h(b)+h(a)+a+b$
      \item $\mathrm{SubTreeSeries}_{t_3}=\mathbb{P}_{t_3}=t_3+f(b,h(b))+t_2+2h(b)+h(a)+3b+a$
      \item $\mathrm{SubTreeSeries}_{\{t_1,t_2\}}=\mathbb{P}_{\{t_1,t_2\}}=\mathbb{P}_{t_1}+\mathbb{P}_{t_2}=t+t_2+f(h(a),b)+3h(a)+h(b)+3a+2b$
      \item $\mathbb{P}_{\{t_1,t_2\}}\times \mathbb{P}_{\{t_3\}}=t_2+2h(b)+3h(a)+6b+3a$
      \item $\mathrm{KerSeries}(\{t_1,t_2\},\{t_3\})=15$
    \end{itemize}
  \end{example}

\section{Tree Series and Root-Weighted Tree Automata} \label{sec tree ser rwta} 
  
  A \emph{formal tree series}~\cite{BR82,EK03} $\mathbb{P}$ over a set $S$ is a mapping from $T_\Sigma$ to $S$.
  Let $\mathbb{M}=(M,+)$ be a monoid which identity is $0$.
  The \emph{support} of $\mathbb{P}$ is the set $\mathrm{Support}(\mathbb{P})=\{t\in T_\Sigma\mid \mathbb{P}(t)\neq 0\}$.
  Any formal tree series can be view as a formal sum $\mathbb{P}=\sum_{t\in T_\Sigma} (\mathbb{P}(t),t)$.
  In this case, the formal sum is considered associative and commutative. 
  
  Formal tree series can be realized by weighted tree automata.
  Weighted tree automata were defined over semirings~\cite{DPV05} or multioperator monoids~\cite{FMV09}.
  In this paper, we use particular automata, the weights of which belong to a monoid or a semiring, and only label the finality of states.
  Consequently, the automata we use are a strcit subclasses of weighted tree automata, with particular properties. 
  
  \subsection{Root-Weighted Tree Automata}

  \begin{definition}
    Let $\mathbb{M}=(M,+)$ be a commutative monoid.
    A $\mathbb{M}$-\emph{Root Weighted Tree Automata} ($\mathbb{M}$-RWTA) is a 4-tuple $(\Sigma,Q,\nu,\delta)$ where:
    \begin{itemize}
      \item $\Sigma=\bigcup_{k\in\mathbb{N}} \Sigma_k$ is a graded alphabet,
      \item $Q$ is a finite set of \emph{states},
      \item $\nu$ is a function from $Q$ to $M$ called the \emph{root weight function},
      \item $\delta$ is a subset of $Q\times\Sigma_k\times Q^k$, called the \emph{transition set}.
    \end{itemize}
  \end{definition}
  
  When there is no ambiguity, a $\mathbb{M}$-RWTA is called a RWTA.
  
  The root weight function $\nu$ is extended to $2^Q \rightarrow M$ for any subset $S$ of $Q$ by $\nu(S)=\sum_{s\in S} \nu(s)$.
  The function $\nu$ is equivalent to the finite subset of $Q\times M$ defined for any couple $(q,m)$ in $Q\times M$ by $(q,m)\in\nu$ $\Leftrightarrow$ $\nu(q)=m$.
  
  The transition set $\delta$ is equivalent to the function from $\Sigma_k\times Q^k$ to $2^Q$ defined for any symbol $f$ in $\Sigma_k$ and for any $k$-tuple $(q_1,\ldots,q_k)$ in $Q^k$ by $q\in\delta(f,q_1,\ldots,q_k) \Leftrightarrow (q,f,q_1,\ldots,q_k)\in\delta$. 
  The function $\delta$ is extended to $\Sigma_k \times (2^Q)^k  \rightarrow 2^Q$  as follows: for any symbol $f$ in $\Sigma_k$, for any $k$-tuple $(Q_1,\ldots,Q_k)$ of subsets of $Q$, $\delta(f,Q_1,\ldots,Q_k)=\bigcup_{(q_1,\ldots,q_k)\in Q_1\times\cdots\times Q_k} \delta(f,q_1,\ldots,q_k)$. 
  Finally, the function $\Delta$ is the function from  $T_{\Sigma}$ to $2^Q$ defined for any tree $t=f(t_1,\ldots,t_k)$ in $T_{\Sigma}$  by $\Delta(t)=\delta(f,\Delta(t_1),\ldots,\Delta(t_k))$.
  
  \noindent A \emph{weight} of a tree associated with $A$ is $\nu(\Delta(t))$.  
  The \emph{formal tree series realized} by $A$ is the formal tree series over $M$ denoted by $\mathbb{P}_A$ and defined by $\mathbb{P}_A(t)=\nu(\Delta(t))$, with $\nu(\emptyset)=0$ with $0$ the identity of $\mathbb{M}$.
  
  \begin{example}\label{ex rwta}
    Let us consider the graded alphabet $\Sigma$ defined by $\Sigma_0=\{a\}$, $\Sigma_1=\{h\}$ and $\Sigma_2=\{f\}$.
    Let $\mathbb{M}=(\mathbb{N},+)$.
    The RWTA $A=(\Sigma,Q,\nu,\delta)$ defined by
    \begin{itemize}
      \item $Q=\{1,2,3,4,5\}$,
      \item $\nu=\{(1,0),(2,3),(3,1),(4,2),(5,4)\}$,
      \item $\delta=\{(1,a),(3,a)(2,f,1,3),(4,f,3,3),(5,h,2),(5,h,4),(5,h,5)\}$,
    \end{itemize}
    is represented in Figure~\ref{fig ex RWTA} and realized the tree series:
    
    \centerline{
      $\mathbb{P}_A=a+5f(a,a)+4h(f(a,a))+4h(h(f(a,a)))+\cdots+4h(h(\ldots h(f(a,a))\ldots))+\cdots$.
    } 
  \end{example}  
  
  \begin{figure}[H]
    \centerline{
	    \begin{tikzpicture}[node distance=2.5cm,bend angle=30,transform shape,scale=1]
	      \node[state] (q5) {$5$};
	      \node[state, below left of=q5] (q2) {$2$};	
	      \node[state, below right of=q5] (q4) {$4$};	
	      \node[state, below of=q2] (q1) {$1$};	
	      \node[state, below of=q4] (q3) {$3$};	
        \draw (q5) ++(-0.75cm,0cm) edge[<-] node {$4$} (q5); 
        \draw (q2) ++(-0.75cm,0cm) edge[<-] node {$3$} (q2);  
        \draw (q4) ++(0.75cm,0cm) edge[above,<-] node {$2$} (q4);  
        \draw (q3) ++(0.75cm,0cm) edge[above,<-] node {$1$} (q3); 
        \draw (q1) ++(0cm,-1cm) node {$a$}  edge[->] (q1); 
        \draw (q3) ++(0cm,-1cm) node {$a$}  edge[->] (q3);  
        \path[->]
          (q2) edge[->,above left] node {$h$} (q5)
          (q4) edge[->,above right] node {$h$} (q5)
          (q5) edge[loop,->,above] node {$h$} ()
        ;
        \draw (q1) ++(1cm,1.5cm)  edge[->] node[above right,pos=0] {$f$} (q2) edge[shorten >=0pt,] (q1) edge[shorten >=0pt,] (q3); 
        \draw (q3) ++(0cm,1.5cm)  edge[->] node[above right,pos=0] {$f$} (q4) edge[shorten >=0pt,in=135,out=-135] (q3) edge[shorten >=0pt,in=45,out=-45] (q3); 
      \end{tikzpicture}
    }
    \caption{The RWTA $A$.}
    \label{fig ex RWTA}
  \end{figure}
  
  Let $A_1=(\Sigma,Q_1,\nu_1,\delta_1)$ and $A_2=(\Gamma,Q_2,\nu_2,\delta_2)$ be two RWTAs. 
  A function $\mu$ is a \emph{morphism  of RWTA} from $A_1$ to $A_2$ if:
  \begin{itemize}
    \item $\forall q\in Q_1$, $\mu(q)\in Q_2$,
    \item $\forall f\in \Sigma_k$, $\mu(f)\in \Gamma_k$,
    \item $\forall (q,f,q_1,\ldots,q_k)\in\delta_1$, $(\mu(q),\mu(f),\mu(q_1),\ldots,\mu(q_k))\in\delta_2$,
    \item $\forall q\in Q_1$, $\nu_2(\mu(q))=\nu_1(q)$.
  \end{itemize}
  A morphism $\mu$ from $A_1$ to $A_2$ is said to be an \emph{isomorphism} if there exists a morphism $\mu^{-1}$ from $A_2$ to $A_1$.
  In this case, $A_1$ and $A_2$ are said to be \emph{isomorphic}.
  It can be shown by induction over the structure of any tree $t$ in $T_\Sigma$ that if $A_1$ and $A_2$ are isomorphic w.r.t. a morphism $\mu$ then $\Delta_2(\mu(t))=\bigcup_{q\in \Delta_1(t)}\{\mu(q)\}$. 
  Therefore 
  
  \begin{lemma}
    Let $A_1$ be a RWTA over an alphabet $\Sigma$.
    Let $A_2$ be a RWTA isomorphic to $A_1$ w.r.t. a morphism $\mu$.
    Then for any tree $t$ in $T_\Sigma$,
    
    \centerline{
      $\mathbb{P}_{A_1}(t)=\mathbb{P}_{A_2}(\mu(t))$.
    }
  \end{lemma}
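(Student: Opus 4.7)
The plan is to leverage the set-valued identity $\Delta_2(\mu(t))=\bigcup_{q\in\Delta_1(t)}\{\mu(q)\}$ stated in the paragraph preceding the lemma, and then pass from sets of states to weights via the root-weight compatibility condition $\nu_2(\mu(q))=\nu_1(q)$ of an RWTA morphism. First I would (re)confirm the set-valued identity by structural induction on $t$: in the base case $t\in\Sigma_0$, the transition-preservation axiom applied to $\mu$ gives one inclusion, and the same axiom applied to the inverse morphism $\mu^{-1}$ gives the other; in the inductive step $t=f(t_1,\ldots,t_k)$ one unfolds $\Delta_2(\mu(t))=\delta_2(\mu(f),\Delta_2(\mu(t_1)),\ldots,\Delta_2(\mu(t_k)))$, substitutes the induction hypotheses on the $t_j$, and again uses the transition preservation of $\mu$ and $\mu^{-1}$ to transfer transitions in both directions.

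With this identity in hand the weight computation is essentially automatic:
\[
\mathbb{P}_{A_2}(\mu(t))=\nu_2(\Delta_2(\mu(t)))=\nu_2\!\left(\bigcup_{q\in\Delta_1(t)}\{\mu(q)\}\right)=\sum_{q\in\Delta_1(t)}\nu_2(\mu(q))=\sum_{q\in\Delta_1(t)}\nu_1(q)=\nu_1(\Delta_1(t))=\mathbb{P}_{A_1}(t),
\]
where the penultimate equality is the root-weight axiom on $\mu$ and the remaining steps are direct unfoldings of the definitions of $\mathbb{P}_A$ and of the extension of $\nu$ to subsets of states.

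The one delicate step is the third equality, which rewrites the $\nu_2$-value of a union of singletons as a sum indexed by $\Delta_1(t)$. Since $\nu$ is extended to subsets by summation over the \emph{distinct} elements of the set, this rewriting is only legitimate when the map $q\mapsto\mu(q)$ is injective on $\Delta_1(t)$; otherwise a collision $\mu(q)=\mu(q')$ with $q\neq q'$ would collapse two contributions in the union while leaving them separate in the indexed sum. Global injectivity of $\mu$ on $Q_1$ is however forced by the assumed existence of the inverse morphism $\mu^{-1}\colon A_2\to A_1$, since $\mu^{-1}\circ\mu$ acts as the identity on $Q_1$, so $\mu(q)=\mu(q')$ immediately yields $q=q'$. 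This bookkeeping point is the only part of the argument I would take care to spell out explicitly; everything else is a line-by-line unfolding.
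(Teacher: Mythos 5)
Your proposal is correct and follows essentially the same route as the paper, which likewise establishes the set identity $\Delta_2(\mu(t))=\bigcup_{q\in\Delta_1(t)}\{\mu(q)\}$ by structural induction in the paragraph preceding the lemma and then concludes via the root-weight axiom $\nu_2(\mu(q))=\nu_1(q)$. Your explicit treatment of the injectivity of $\mu$ on $\Delta_1(t)$ (needed to pass from the union of singletons to the indexed sum, and guaranteed by the existence of $\mu^{-1}$) is a detail the paper leaves implicit, and it is a worthwhile addition rather than a deviation.
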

  
  As a direct corollary, it holds
  
  \begin{corollary}  
    Two isomorphic RWTAs over the same alphabet realize the same tree series.  
  \end{corollary}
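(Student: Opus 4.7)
The plan is to deduce this directly from the preceding lemma, which already supplies the identity $\mathbb{P}_{A_1}(t)=\mathbb{P}_{A_2}(\mu(t))$ for every tree $t\in T_\Sigma$. What remains is to remove the occurrence of $\mu$ on the right-hand side under the extra hypothesis that $A_1$ and $A_2$ share the same alphabet.

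First I would unpack the definition of morphism in the common-alphabet case. When both $A_1$ and $A_2$ are RWTAs over $\Sigma$, the condition $\mu(f)\in\Sigma_k$ for every $f\in\Sigma_k$ together with the existence of $\mu^{-1}$ forces the restriction of $\mu$ to the graded alphabet to be a bijection preserving arities. The natural (and intended) convention here is that the isomorphism is taken to be the identity on $\Sigma$, i.e.\ $\mu(f)=f$ for every symbol $f\in\Sigma_k$; I would state this explicitly at the start of the argument.

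Second, I would promote this identity on symbols to an identity on trees by a straightforward induction on the structure of $t\in T_\Sigma$: for $t=f(t_1,\ldots,t_k)$, the inductive hypothesis gives $\mu(t_j)=t_j$, and since $\mu(f)=f$ we obtain $\mu(t)=f(\mu(t_1),\ldots,\mu(t_k))=t$.

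Finally, substituting $\mu(t)=t$ into the conclusion of the preceding lemma yields $\mathbb{P}_{A_1}(t)=\mathbb{P}_{A_2}(t)$ for every $t\in T_\Sigma$, which is exactly the statement that the two tree series coincide. There is really no obstacle here; the only subtle point is making the alphabet-preservation convention explicit, as otherwise the morphism could renumber symbols and only give equality of series up to that permutation.
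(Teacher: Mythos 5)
Your proof is correct and follows the same route as the paper, which simply derives the corollary from the preceding lemma by taking $\mu(t)=t$. Your added observation is a genuine improvement in rigor: as literally stated, the corollary would fail for an isomorphism that permutes symbols of equal arity (e.g.\ two one-state automata over $\Sigma_0=\{a,b\}$ related by $\mu(a)=b$ realize different series), so making explicit the convention that $\mu$ restricts to the identity on $\Sigma$ is exactly the right thing to do.
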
  
  
  \subsection{RWTA Sequentialization}
  
  The RTWA $A$ is said to be \emph{sequential} if and only if for any tree $t$ in $T_\Sigma$, $\mathrm{Card}(\Delta(t))\leq 1$.
  Unlike the case of classical weighted tree and word automata, the RWTAs are sequentializable.
  
  \begin{theorem}\label{thm sequentialization ok}
    For any RTWA $A$, there exists a sequential RTWA $A'$ such that $\mathbb{P}_A=\mathbb{P}_{A'}$.
  \end{theorem}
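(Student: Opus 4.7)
The plan is to mimic the classical subset construction used to determinize nondeterministic (unweighted) tree automata, exploiting the fact that in an RWTA all weights live on states, and that the extension $\nu(S)=\sum_{s\in S}\nu(s)$ is available because $\mathbb{M}$ is a commutative monoid.

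Given $A=(\Sigma,Q,\nu,\delta)$, I would construct $A'=(\Sigma,Q',\nu',\delta')$ with $Q'\subseteq 2^{Q}$, where the states are the subsets of $Q$ reachable by bottom-up computation. Concretely, the transition function is defined so that $(S,f,S_1,\ldots,S_k)\in\delta'$ if and only if $S=\delta(f,S_1,\ldots,S_k)$, using the already-defined extension of $\delta$ to subsets; and the root weight is $\nu'(S)=\nu(S)=\sum_{s\in S}\nu(s)$. By construction, for each symbol $f\in\Sigma_k$ and each tuple $(S_1,\ldots,S_k)\in (Q')^k$ there is exactly one $S$ with $(S,f,S_1,\ldots,S_k)\in\delta'$, so $A'$ is sequential in the sense of the definition.

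The core verification is a straightforward induction on the structure of $t\in T_\Sigma$ to show that $\Delta'(t)=\{\Delta(t)\}$. The base case $t\in\Sigma_0$ is immediate from the definition of $\delta'$. For $t=f(t_1,\ldots,t_k)$, by induction $\Delta'(t_j)=\{\Delta(t_j)\}$, and then $\Delta'(t)=\delta'(f,\{\Delta(t_1)\},\ldots,\{\Delta(t_k)\})=\{\delta(f,\Delta(t_1),\ldots,\Delta(t_k))\}=\{\Delta(t)\}$. Computing the weight yields $\mathbb{P}_{A'}(t)=\nu'(\Delta'(t))=\nu'(\{\Delta(t)\})=\nu(\Delta(t))=\mathbb{P}_A(t)$, which is the desired equality.

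I do not expect any serious obstacle here: the contrast with the classical case (where determinization of weighted tree or word automata fails in general) is precisely that transition weights are absent, so no nontrivial bookkeeping of weights along runs is required and the commutativity of $\mathbb{M}$ makes the collapsed root weight $\nu(S)$ well-defined. The only minor point worth stating explicitly is that one may restrict $Q'$ to the subsets actually reachable from the constants, which keeps $A'$ finite (bounded by $2^{|Q|}$) but does not affect the equality $\mathbb{P}_A=\mathbb{P}_{A'}$, since unreachable subset states contribute to no tree weight.
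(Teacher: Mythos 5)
Your proposal is correct and follows essentially the same route as the paper: the subset construction with $\nu'(S)=\sum_{s\in S}\nu(s)$ and $\delta'(f,S_1,\ldots,S_k)=\{\delta(f,S_1,\ldots,S_k)\}$, followed by the induction establishing $\Delta'(t)=\{\Delta(t)\}$, is exactly the paper's Lemma~\ref{lem delta' eq delta} and Proposition~\ref{prop seq pres ser}. The only cosmetic difference is that you restrict to reachable subsets, which the paper does not bother with here.
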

  
  In order to prove Theorem~\ref{thm sequentialization ok}, let us define the subset construction~\cite{RS59} for any RWTA.
  
  \begin{definition}
    Let $A=(\Sigma,Q,\nu,\delta)$ be a RWTA. 
    The \emph{sequential RWTA associated with} $A$ is the RTWA $A'=(\Sigma,2^Q,\nu',\delta')$ defined by:
    \begin{itemize}
      \item $\forall S\subset Q$, $\nu'(S)=\sum_{s\in S} \nu(s)$;
      \item $\forall f\in\Sigma_k$, $\forall Q_1,\ldots,Q_k \subset Q$, $\delta'(f,Q_1,\ldots,Q_n)=\{\delta(f,Q_1,\ldots,Q_k)\}$.
    \end{itemize}
  \end{definition}
  
  Notice that $\nu'$ is equal to the extension of $\nu$ over the subsets of $Q$. 
  However, $\delta'$ is not equal to the extension of $\delta$ over sets since it necessarily returns a singleton.
  
  \begin{lemma}\label{lem delta' eq delta}
    Let $A=(\Sigma,Q,\nu,\delta)$.
    Let $A'=(\Sigma,2^Q,\nu',\delta')$ be the sequential RWTA associated with $A$. 
    For any tree $t$ in $T_\Sigma$,
    
    \centerline{
      $\Delta'(t)=\{\Delta(t)\}$. 
    }
  \end{lemma}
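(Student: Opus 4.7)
The plan is to proceed by structural induction on the tree $t \in T_\Sigma$, mirroring the inductive definition of $\Delta$ and $\Delta'$. The statement is essentially a routine unfolding once one is careful about the two levels of set extension at play: $\Delta'(t)$ is a subset of the state set $2^Q$ of $A'$, so it is a set of subsets of $Q$, and the claim is that this set is always a singleton whose unique element coincides with $\Delta(t) \subseteq Q$.

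For the base case, take $t = a \in \Sigma_0$. Then directly from the definition, $\Delta'(a) = \delta'(a) = \{\delta(a)\} = \{\Delta(a)\}$, since the definition of $\delta'$ on a constant symbol yields precisely the singleton containing $\delta(a)$. For the inductive step, let $t = f(t_1,\ldots,t_k)$ with $f \in \Sigma_k$ and assume as induction hypothesis that $\Delta'(t_i) = \{\Delta(t_i)\}$ for each $1 \leq i \leq k$. Applying the definition of $\Delta'$ gives $\Delta'(t) = \delta'(f, \Delta'(t_1), \ldots, \Delta'(t_k)) = \delta'(f, \{\Delta(t_1)\}, \ldots, \{\Delta(t_k)\})$.

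The only delicate point, and arguably the main (minor) obstacle, is the distinction between $\delta'$ as a function of $k$-tuples of states of $A'$ (elements of $2^Q$) and its set-theoretic extension to $k$-tuples of subsets of $2^Q$. By the extension formula applied to $A'$, one has $\delta'(f, \{\Delta(t_1)\}, \ldots, \{\Delta(t_k)\}) = \bigcup_{(P_1,\ldots,P_k) \in \{\Delta(t_1)\} \times \cdots \times \{\Delta(t_k)\}} \delta'(f, P_1, \ldots, P_k)$, and since each factor in this Cartesian product is a singleton, the union reduces to the single term $\delta'(f, \Delta(t_1), \ldots, \Delta(t_k))$. By the definition of $\delta'$ on tuples of states of $A'$ (that is, subsets of $Q$), this equals $\{\delta(f, \Delta(t_1), \ldots, \Delta(t_k))\}$, which by the definition of $\Delta$ is exactly $\{\Delta(t)\}$, closing the induction.
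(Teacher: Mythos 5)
Your proof is correct and follows essentially the same structural induction as the paper's own argument: base case on constants, then unfolding $\delta'$ on the singletons provided by the induction hypothesis. The only difference is that you make explicit the collapse of the set-theoretic extension of $\delta'$ over a Cartesian product of singletons, a step the paper leaves implicit.
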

  \begin{proof}
    By definition of $\Delta'$, $\Delta'(f(t_1,\ldots,t_k))=\delta'(f,\Delta'(t_1),\ldots,\Delta'(t_k))$.
    \begin{enumerate}
      \item If $k=0$, then $\Delta'(f)=\delta'(f)$. 
      Moreover, by definition of $A'$, $\delta'(f)=\{\delta(f)\}$. 
      Since by definition of $\Delta$, $\delta(f)=\Delta(f)$, it holds that $\Delta'(f)=\{\Delta(f)\}$.
      \item Suppose that $k\neq 0$. 
      According to induction hypothesis, it holds that $\Delta'(f(t_1,\ldots,t_k))=\delta'(f,\{\Delta(t_1)\},\ldots,\{\Delta(t_k)\})$. 
      By definition of $\delta'$, $\delta'(f,\{\Delta(t_1)\},\ldots,\{\Delta(t_k)\})=\{\delta(f,\Delta(t_1),\ldots,\Delta(t_k))\}$, that equals by definition $\{\Delta(f(t_1,\ldots,t_k))\}$.
      Hence $\Delta'(t)=\{\Delta(t)\}$.
    \end{enumerate}
    \cqfd
  \end{proof}
  
  \begin{proposition}\label{prop seq pres ser}
    Let $A$ be a RWTA and $A'$ be the sequential RWTA associated with $A$. 
    Then:
    
    \centerline{
      $A'$ is a sequential RTWA that realizes $\mathbb{P}_A$.
    }
  \end{proposition}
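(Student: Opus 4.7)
The plan is to deduce the proposition almost immediately from Lemma~\ref{lem delta' eq delta}, which is the technical content; what remains is just unwinding definitions carefully, paying attention to the two roles played by $\nu'$ (as a function on states of $A'$, i.e.\ subsets of $Q$, and as the sum extension on subsets of those states).

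First I would establish sequentiality. By Lemma~\ref{lem delta' eq delta}, for every tree $t\in T_\Sigma$ the set $\Delta'(t)$ equals the singleton $\{\Delta(t)\}$, so $\mathrm{Card}(\Delta'(t))=1\leq 1$, which is exactly the definition of $A'$ being sequential.

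Next I would verify that $\mathbb{P}_{A'}=\mathbb{P}_A$. Fix $t\in T_\Sigma$. By definition of the realized series, $\mathbb{P}_{A'}(t)=\nu'(\Delta'(t))$, where $\nu'$ is applied in its extended sense to the subset $\Delta'(t)$ of the state set $2^Q$ of $A'$. Using Lemma~\ref{lem delta' eq delta}, $\Delta'(t)=\{\Delta(t)\}$, so this extended sum collapses to the single term $\nu'(\Delta(t))$, where now $\nu'$ is evaluated at the state $\Delta(t)\in 2^Q$ of $A'$. By the first clause of the definition of $A'$, this value is $\sum_{s\in\Delta(t)}\nu(s)$, which is precisely the extension $\nu(\Delta(t))$ used to define $\mathbb{P}_A(t)$. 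Chaining these equalities yields $\mathbb{P}_{A'}(t)=\nu(\Delta(t))=\mathbb{P}_A(t)$.

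There is no real obstacle here; the only delicate point to flag explicitly is the notational overload of $\nu'$ (once as a map $2^Q\to M$ defining the root weight of a state of $A'$, once as its further extension to $2^{2^Q}\to M$ used in evaluating $\mathbb{P}_{A'}$), so I would make sure that transition is spelled out in one sentence to avoid any ambiguity.
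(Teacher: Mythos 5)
Your proof is correct and follows essentially the same route as the paper's: both derive sequentiality and the equality $\mathbb{P}_{A'}(t)=\nu'(\Delta'(t))=\nu'(\Delta(t))=\nu(\Delta(t))=\mathbb{P}_A(t)$ directly from Lemma~\ref{lem delta' eq delta}. Your explicit handling of the two roles of $\nu'$ (as the root weight of a state of $A'$ versus its extension to sets of such states) is a welcome clarification that the paper leaves implicit.
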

  \begin{proof}
    Let $A=(\Sigma,Q,\nu,\delta)$ and $A'=(\Sigma,2^Q,\nu',\delta')$. 
    Let $t=f(t_1,\ldots,t_k)$ be a tree in $\Sigma$.
    According to Lemma~\ref{lem delta' eq delta}, $\Delta'(t)=\{\Delta(t)\}$. 
    As a direct consequence, $\mathrm{Card}(\Delta'(t))=1$ (since the state $\emptyset$ may be reached) and $\mathbb{P}_{A'}(t)=\nu'(\Delta'(t))=\nu'(\Delta(t))=\mathbb{P}_{A}(t)$.
    Hence $A'$ is a sequential RTWA that realizes $\mathbb{P}_A$.
    \cqfd
  \end{proof}
  
  \begin{example}
    Let us consider the RWTA defined in Example~\ref{ex rwta}.
    The sequential RWTA associated with $A$ is represented in Figure~\ref{fig ex seq rwta}.
  \end{example}  
  
  \begin{figure}[H]
    \centerline{
	    \begin{tikzpicture}[node distance=2.5cm,bend angle=30,transform shape,scale=1]
	      \node[state] (q5) {$5$};
	      \node[state, below of=q5, rounded rectangle] (q24) {$\{2,4\}$};	
	      \node[state, below of=q24, rounded rectangle] (q13) {$\{1,3\}$};	
        \draw (q5) ++(-0.75cm,0cm) edge[<-] node {$4$} (q5); 
        \draw (q24) ++(-1cm,0cm) edge[<-] node {$5$} (q24);  
        \draw (q13) ++(-1cm,0cm) edge[above,<-] node {$1$} (q13); 
        \draw (q13) ++(0cm,-1cm) node {$a$}  edge[->] (q13); 
        \path[->]
          (q24) edge[->,above left] node {$h$} (q5)
          (q5) edge[loop,->,above] node {$h$} ()
        ;
        \draw (q13) ++(0cm,1.5cm)  edge[->] node[above right,pos=0] {$f$} (q24) edge[shorten >=0pt,in=135,out=-135] (q13) edge[shorten >=0pt,in=45,out=-45] (q13); 
      \end{tikzpicture}
    }
    \caption{The sequential RWTA associated with $A$.}
    \label{fig ex seq rwta}
  \end{figure}
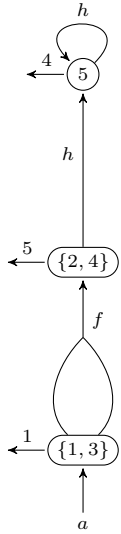
  
  Since a sequential RWTA is a RTWA, the set of tree series realized by a RTWA is closed under sequentialization, whatever the set of weights is.
  Let us now show that this set is also closed under several algebraic operations.
  
  \subsection{Sum and Product Closures}
  
    If $(M,+)$ is a commutative monoid, then the set of tree series over $M$ realized by a RTWA is closed under the sum.    
    
    \begin{definition}
      Let $A_1=(\Sigma,Q_1,\nu_1,\delta_1)$ and $A_2=(\Sigma,Q_2,\nu_2,\delta_2)$ be two RWTAs such that $Q_1\cap Q_2=\emptyset$. 
      The RWTA $A_1+A_2$ is the RWTA $A'=(\Sigma,Q_1\cup Q_2,\nu',\delta_1\cup\delta_2)$ where $\nu'$ is the function defined for any state $q$ in $Q_1\cup Q_2$ by:
       
      \centerline{
        $\nu'(q)=
          \left\{
            \begin{array}{l@{\ }l}
              \nu_1(q) & \text{ if }q\in Q_1,\\
	            \nu_2(q) & \text{ otherwise,}
	          \end{array}
	        \right.
	      $
	    }
    \end{definition}
    
    Notice that if $Q_1$ and $Q_2$ are not disjoint, then $Q_2$ can be changed using an isomorphism. 
    
    \begin{proposition}\label{prop sum aut is sum ser}
      Let $A_1$ and $A_2$ be two RWTAs. 
      Then for any tree $t$ in $T_\Sigma$:
      
      \centerline{
        $\mathbb{P}_{A_1+A_2}(t)=\mathbb{P}_{A_1}(t)+\mathbb{P}_{A_2}(t)$.
      }
    \end{proposition}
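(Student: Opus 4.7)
The plan is to prove, by structural induction on $t$, that the set-valued computation splits along the disjoint decomposition of states, namely $\Delta'(t) = \Delta_1(t)\cup\Delta_2(t)$, and then to conclude by using that $\nu'$ restricts to $\nu_1$ on $Q_1$ and to $\nu_2$ on $Q_2$, together with the disjointness of $Q_1$ and $Q_2$.

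For the base case, if $t=c\in\Sigma_0$, then $\Delta'(c)=\delta'(c)=\delta_1(c)\cup\delta_2(c)=\Delta_1(c)\cup\Delta_2(c)$ directly from the definition of $A_1+A_2$. For the inductive step on $t=f(t_1,\ldots,t_k)$, I would first apply the induction hypothesis to each $t_j$, so that $\Delta'(t_j)=\Delta_1(t_j)\cup\Delta_2(t_j)$, and then use the extension of the transition function to sets of states to write $\Delta'(t)=\bigcup_{(q_1,\ldots,q_k)} \delta'(f,q_1,\ldots,q_k)$ where the union ranges over all tuples in $\prod_{j}(\Delta_1(t_j)\cup\Delta_2(t_j))$. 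The key observation, and the main obstacle, is that any transition $(q,f,q_1,\ldots,q_k)\in\delta_1\cup\delta_2$ is either entirely in $\delta_1$ (hence $q_1,\ldots,q_k\in Q_1$) or entirely in $\delta_2$ (hence $q_1,\ldots,q_k\in Q_2$), since transitions are never introduced that mix the two state sets. Consequently, any mixed tuple $(q_1,\ldots,q_k)$ having some components in $Q_1$ and some in $Q_2$ contributes $\delta'(f,q_1,\ldots,q_k)=\emptyset$ to the union, and only the ``pure'' tuples survive. This reduces the union to $\delta_1(f,\Delta_1(t_1),\ldots,\Delta_1(t_k))\cup\delta_2(f,\Delta_2(t_1),\ldots,\Delta_2(t_k)) = \Delta_1(t)\cup\Delta_2(t)$, closing the induction.

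Once the inductive claim is established, I would apply $\nu'$ and use commutativity of $\mathbb{M}$: because $\Delta_1(t)\subseteq Q_1$ and $\Delta_2(t)\subseteq Q_2$ are disjoint, we have $\nu'(\Delta_1(t)\cup\Delta_2(t)) = \sum_{q\in\Delta_1(t)}\nu'(q) + \sum_{q\in\Delta_2(t)}\nu'(q)$, and since $\nu'$ agrees with $\nu_1$ on $Q_1$ and with $\nu_2$ on $Q_2$, this equals $\nu_1(\Delta_1(t))+\nu_2(\Delta_2(t))=\mathbb{P}_{A_1}(t)+\mathbb{P}_{A_2}(t)$. The empty-set case is handled by the convention $\nu(\emptyset)=0$, so trees accepted by only one of the two automata are treated correctly.

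The only nontrivial point is thus verifying that mixed input tuples produce no transitions; everything else is a routine induction paired with the commutative-monoid structure. I would present the argument as a short lemma establishing $\Delta'(t)=\Delta_1(t)\cup\Delta_2(t)$, followed by a one-line computation concluding the proposition.
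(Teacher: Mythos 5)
Your proof is correct and follows essentially the same route as the paper: a structural induction establishing $\Delta'(t)=\Delta_1(t)\cup\Delta_2(t)$, with the key observation that disjointness of $Q_1$ and $Q_2$ kills all mixed tuples in the extended transition function, followed by splitting $\nu'$ over the disjoint union. Your explicit handling of the $\nu(\emptyset)=0$ convention is a small point of additional care not spelled out in the paper's version.
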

    \begin{proof}
      Let $A_1=(\Sigma,Q_1,\nu_1,\delta_1)$, $A_2=(\Sigma,Q_2,\nu_2,\delta_2)$ and $A'=A_1+A_2=(\Sigma,Q',\nu',\delta')$.
      Let $t=f(t_1,\ldots,t_k)$ be a tree in $T_\Sigma$.
      Let us first show by induction over the structure of $t$ that $\Delta'(t)=\Delta_1(t)\cup\Delta_2(t)$.
      By definition, $\Delta'(f(t_1,\ldots,t_k))=\delta'(f,\Delta'(t_1),\ldots,\Delta'(t_k))$.
      \begin{enumerate}
        \item if $k=0$, then $\Delta'(f)=\delta'(f)$. 
        By definition of $A'$, $\delta'(t)=\delta_1(t)\cup\delta_2(t)$ that equals by definition to $\Delta_1(f)\cup\Delta_2(f)$. 
        Hence $\Delta'(t)=\Delta_1(t)\cup\Delta_2(t)$.
        \item If $k\neq 0$, then by induction hypothesis, $\Delta'(f(t_1,\ldots,t_k))=\delta'(f,\Delta_1(t_1)\cup\Delta_2(t_1),\ldots,\Delta_1(t_k)\cup\Delta_2(t_k))$.
        Since $Q_1$ and $Q_2$ are disjoint, there is no transition $(q,f,q_1,\ldots,q_n)$ in $\Delta'$ such that there exists two integers $i$ and $j$ such that $q_i\in Q_1$ and $q_j\in Q_2$.
        Therefore $\delta'(f,\Delta_1(t_1)\cup\Delta_2(t_1),\ldots,\Delta_1(t_k)\cup\Delta_2(t_k))=\delta_1(f,\Delta_1(t_1),\ldots,\Delta_1(t_k))\cup \delta_2(f,\Delta_2(t_1),\ldots,\Delta_2(t_k))$, that is equal to $\Delta_1(f(t_1,\ldots,t_k))\cup \Delta_2(f(t_1,\ldots,t_k))$.  
        Hence $\Delta'(t)=\Delta_1(t)\cup\Delta_2(t)$.
      \end{enumerate}
      As a direct consequence, $\mathbb{P}_{A'}(t)=\nu'(\Delta_1(t)\cup\Delta_2(t))=\nu_1(\Delta_1(t))+\nu_2(\Delta_2(t))=\mathbb{P}_{A_1}(t)+\mathbb{P}_{A_2}(t)$.
      \cqfd
    \end{proof}
  
    A \emph{semiring} is a $5$-tuple $\mathbb{K}=(K,+,\times,0,1)$ such that:
    \begin{itemize}
      \item $(K,+)$ is a commutative monoid the identity of which is $0$,
      \item $(K,*)$ is a monoid the identity of which is $1$,
      \item $0\times \alpha=\alpha\times 0=0$ for any $\alpha$ in $K$,
      \item $\times$ distributes over $+$.
    \end{itemize}
    
    In the following, we consider trees over an alphabet $\Sigma$ and tree series over the semiring $\mathbb{K}$.    
    From this structure, another stable operation can be defined for formal tree series over $K$.
    
    Let $\mathbb{P}_1$ and $\mathbb{P}_2$ be two tree series.    
    The \emph{product of} $\mathbb{P}_1$ and $\mathbb{P}_2$ is the series $\mathbb{P}_1\times\mathbb{P}_2$ defined for any tree $t$ by $\mathbb{P}_1\times\mathbb{P}_2(t)=\mathbb{P}_1(t)\times \mathbb{P}_2(t)$.
    Let us show now that the product can be performed \emph{via} RWTAs.
    
    \begin{definition}
      Let $A_1=(\Sigma,Q_1,\nu_1,\delta_1)$ and $A_2=(\Sigma,Q_2,\nu_2,\delta_2)$ be two RWTAs. 
      The RWTA $A_1\times A_2$ is the RWTA $A'=(\Sigma,Q'=Q_1\times Q_2,\nu',\delta')$ defined by:
      \begin{itemize}
        \item $\forall f\in \Sigma_k$, $\forall q_1=(q_{1_1},q_{2_1}),\ldots,q_k=(q_{1_k},q_{2_k})\in Q'$, $\delta'(f,q_1,\ldots,q_k)=\delta_1(f,q_{1_1},\ldots,q_{1_k})\times \delta_2(f,q_{2_1},\ldots,q_{2_k})$,
        \item $\forall q=(q_1,q_2)\in Q'$, $\nu'(q)=\nu_1(q_1)\times\nu_2(q_2)$.
      \end{itemize}
    \end{definition}
    
    \begin{lemma}\label{lem delta pour prod cart}
      Let $A_1=(\Sigma,Q_1,\nu_1,\delta_1)$ and $A_2=(\Sigma,Q_2,\nu_2,\delta_2)$ be two RWTAs. 
      Then for any tree $t$ in $T_\Sigma$:
      
      \centerline{
        $\Delta'(t)=\Delta_1(t)\times \Delta_2(t)$.
      }
    \end{lemma}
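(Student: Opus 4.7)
The plan is to proceed by structural induction on the tree $t$, following the same pattern used in the proof of Proposition~\ref{prop sum aut is sum ser}. The base and inductive cases mirror the ones there; the only real difference is that the union of transition sets is replaced by a Cartesian product, and this is where the one slightly delicate bookkeeping step appears.

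For the base case, I would take $t = f$ with $f \in \Sigma_0$. By definition of $\Delta'$, $\Delta'(f) = \delta'(f)$, and by the definition of $A_1 \times A_2$, $\delta'(f) = \delta_1(f) \times \delta_2(f)$. Unfolding $\Delta_i(f) = \delta_i(f)$ for $i \in \{1,2\}$ immediately gives $\Delta'(f) = \Delta_1(f) \times \Delta_2(f)$.

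For the inductive step, I would take $t = f(t_1, \ldots, t_k)$ with $k \geq 1$ and apply the induction hypothesis to each subtree, obtaining $\Delta'(t_i) = \Delta_1(t_i) \times \Delta_2(t_i)$. Substituting into $\Delta'(t) = \delta'(f, \Delta'(t_1), \ldots, \Delta'(t_k))$ and then using the extension of $\delta'$ to subsets, I need to rewrite
\[
\bigcup_{((p_1,r_1),\ldots,(p_k,r_k)) \in \prod_i (\Delta_1(t_i)\times\Delta_2(t_i))} \delta_1(f,p_1,\ldots,p_k)\times \delta_2(f,r_1,\ldots,r_k)
\]
as $\delta_1(f,\Delta_1(t_1),\ldots,\Delta_1(t_k)) \times \delta_2(f,\Delta_2(t_1),\ldots,\Delta_2(t_k))$, which by definition is $\Delta_1(t)\times\Delta_2(t)$.

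The main obstacle, such as it is, lies in this last rewriting: one needs the set-theoretic identity that a Cartesian product of two unions equals the union over a product of index sets of the products of the indexed terms, together with the identification of $\prod_i (\Delta_1(t_i)\times\Delta_2(t_i))$ with $\prod_i \Delta_1(t_i) \times \prod_i \Delta_2(t_i)$ via the obvious reindexing bijection $((p_1,r_1),\ldots,(p_k,r_k)) \mapsto ((p_1,\ldots,p_k),(r_1,\ldots,r_k))$. Once this bijection is invoked, the double union splits into a product of single unions, each of which is by definition $\delta_i(f,\Delta_i(t_1),\ldots,\Delta_i(t_k)) = \Delta_i(t)$, concluding the induction.
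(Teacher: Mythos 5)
Your proof is correct and follows essentially the same route as the paper: structural induction, with the base case unfolding $\delta'(f)=\delta_1(f)\times\delta_2(f)$ and the inductive step reindexing $\prod_i(\Delta_1(t_i)\times\Delta_2(t_i))$ as $\prod_i\Delta_1(t_i)\times\prod_i\Delta_2(t_i)$ so that the union of products splits into a product of unions. The paper performs exactly this bookkeeping step (invoking ``the definition of the cartesian product of sets''), so there is nothing to add.
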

    \begin{proof}
      By induction over the structure of $t=f(t_1,\ldots,t_k)$.
      By definition, $\Delta'(f(t_1,\ldots,t_k))=\delta'(f,\Delta'(t_1),\ldots,\Delta'(t_k))$.
      \begin{enumerate}
        \item if $k=0$, then $\Delta'(f)=\delta'(f)$. 
        By definition of $A'$, $\delta'(t)=\delta_1(t)\times\delta_2(t)$ that equals by definition to $\Delta_1(f)\times\Delta_2(f)$. 
        Hence $\Delta'(t)=\Delta_1(t)\times\Delta_2(t)$.
        \item If $k\neq 0$, then by induction hypothesis, $\Delta'(f(t_1,\ldots,t_k))=\delta'(f,\Delta_1(t_1)\times\Delta_2(t_1),\ldots,\Delta_1(t_k)\times\Delta_2(t_k))$.
        According to the definition of $\delta'$, 
        
        \centerline{        
          $\delta'(f,\Delta_1(t_1)\times\Delta_2(t_1),\ldots,\Delta_1(t_k)\times\Delta_2(t_k))=\bigcup_{q_j=(q_{1_j},q_{2_j})\in\Delta_1(t_j)\times\Delta_2(t_j),1\leq j\leq k} \delta'(f,q_1,\ldots,q_k)$.
        }
         
        By definition of $A'$, $ \delta'(f,q_1,\ldots,q_k)=\delta_1(f,q_{1_1},\ldots,q_{1_k})\times \delta_2(f,q_{2_1},\ldots,q_{2_k})$, for any $q_j=(q_{1_j},q_{2_j})\in\Delta_1(t_j)\times\Delta_2(t_j),1\leq j\leq k$.
        Furthermore, by definition of the cartesian product of set,
        
        \centerline{
          \begin{tabular}{l@{\ }l}
            $\Delta'(t)$ & $=\bigcup_{q_j=(q_{1_j},q_{2_j})\in\Delta_1(t_j)\times\Delta_2(t_j),1\leq j\leq k} \delta_1(f,q_{1_1},\ldots,q_{1_k})\times \delta_2(f,q_{2_1},\ldots,q_{2_k})$\\
            & $=\bigcup_{q_j\in\Delta_1(t_j),1\leq j\leq k} \delta_1(f,q_1,\ldots,q_k)\times \bigcup_{q_j\in\Delta_2(t_j),1\leq j\leq k) \delta_2(f,q_1,\ldots,q_k})$\\
          \end{tabular}
        }
                
        that is equal to $\delta_1(f,\Delta_1(t_1),\ldots,\Delta_1(t_k))\times \delta_2(f,\Delta_2(t_1),\ldots,\Delta_2(t_k))=\Delta_1(t)\times\Delta_2(t)$ by definition.
      \end{enumerate}
      \cqfd
    \end{proof}
    
    \begin{proposition}
      Let $A_1$ and $A_2$ be two RWTAs. 
      Then for any tree $t$ in $T_\Sigma$:
      
      \centerline{
        $\mathbb{P}_{A_1\times A_2}(t)=\mathbb{P}_{A_1}(t)\times \mathbb{P}_{A_2}(t)$.
      }
    \end{proposition}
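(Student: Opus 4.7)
The plan is to chain together Lemma~\ref{lem delta pour prod cart} with the definition of $\mathbb{P}_{A'}(t)=\nu'(\Delta'(t))$ and then exploit distributivity of $\times$ over $+$ in the semiring $\mathbb{K}$ to split a double sum into a product of sums. No induction on the structure of $t$ should be needed here, since the tree-level work has already been done in the lemma.

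More concretely, I would start by writing $\mathbb{P}_{A_1\times A_2}(t)=\nu'(\Delta'(t))$, then invoke Lemma~\ref{lem delta pour prod cart} to rewrite this as $\nu'(\Delta_1(t)\times \Delta_2(t))$. Unfolding the extension of $\nu'$ to subsets of $Q'=Q_1\times Q_2$ gives
\[
\nu'(\Delta_1(t)\times \Delta_2(t))=\sum_{(q_1,q_2)\in \Delta_1(t)\times \Delta_2(t)} \nu'(q_1,q_2)=\sum_{q_1\in \Delta_1(t)}\sum_{q_2\in \Delta_2(t)} \nu_1(q_1)\times \nu_2(q_2),
\]
where the last equality uses the definition $\nu'(q_1,q_2)=\nu_1(q_1)\times\nu_2(q_2)$. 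Then distributivity of $\times$ over $+$ lets me factor the two sums apart into $\bigl(\sum_{q_1\in\Delta_1(t)}\nu_1(q_1)\bigr)\times\bigl(\sum_{q_2\in\Delta_2(t)}\nu_2(q_2)\bigr)=\nu_1(\Delta_1(t))\times \nu_2(\Delta_2(t))=\mathbb{P}_{A_1}(t)\times \mathbb{P}_{A_2}(t)$.

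The one subtlety that needs a sentence of justification is the degenerate case where $\Delta_1(t)=\emptyset$ or $\Delta_2(t)=\emptyset$. In either case the cartesian product is empty, so the left-hand side reduces to $\nu'(\emptyset)=0$; on the right-hand side, the corresponding $\nu_i(\Delta_i(t))$ is also $0$, and by the semiring axiom $0\times\alpha=\alpha\times 0=0$ both sides agree. Apart from this boundary remark, the proof is essentially a one-line calculation built on the previous lemma and the distributive law, so I do not anticipate any real obstacle.
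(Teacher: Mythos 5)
Your proposal is correct and follows essentially the same route as the paper's proof: apply Lemma~\ref{lem delta pour prod cart} to get $\Delta'(t)=\Delta_1(t)\times\Delta_2(t)$, unfold $\nu'$ over the cartesian product into a double sum, and factor it by distributivity into $\nu_1(\Delta_1(t))\times\nu_2(\Delta_2(t))$. Your explicit remark on the case $\Delta_i(t)=\emptyset$ is a small, harmless addition the paper leaves implicit (it is already covered by the convention $\nu(\emptyset)=0$ and the semiring axiom $0\times\alpha=0$).
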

    \begin{proof}
      Let $A_1=(\Sigma,Q_1,\nu_1,\delta_1)$, $A_2=(\Sigma,Q_2,\nu_2,\delta_2)$ and $A'=A_1\times A_2=(\Sigma,Q',\nu',\delta')$.
      Let $t=f(t_1,\ldots,t_k)$ be a tree in $T_\Sigma$.
      From Lemma~\ref{lem delta pour prod cart}, $\Delta'(t)=\Delta_1(t)\times \Delta_2(t)$.
      Hence $\mathbb{P}_{A'}(t)=\nu'(\Delta_1(t)\times\Delta_2(t))$. 
      By definition of $\nu'$, $\mathbb{P}_{A'}(t)=\sum_{q_1\in\Delta_1(t),q_2\in\Delta_2(t)} \nu_1(q_1)\times\nu_2(q_2)$.
      Since $\mathbb{K}$ is a semiring, by distibutivity, 
      $\mathbb{P}_{A'}(t)=(\sum_{q_1\in\Delta_1(t)} \nu_1(q_1))\times(\sum_{q_2\in\Delta_2(t)} \nu_2(q_2))$.
      Therefore, according to the definition of $ \Delta_1$ and $\Delta_2)$,
      $\mathbb{P}_{A'}(t)=\nu_1(\Delta_1(t))\times \nu_2(\Delta_2(t))$.
      \cqfd
    \end{proof}
  
  \begin{example}
    Let us consider the RWTA $A$ defined in Example~\ref{ex rwta} and let $A'$ be the RWTA represented in Figure~\ref{fig rwta autre}.
    The sum $A+A'$ is represented by the juxtaposition of Figure~\ref{fig ex RWTA} and Figure~\ref{fig rwta autre} and the product $A\times A'$ is represented in Figure~\ref{fig ex rwta prod}.
  \end{example} 
  
  \begin{minipage}{0.45\linewidth}
	  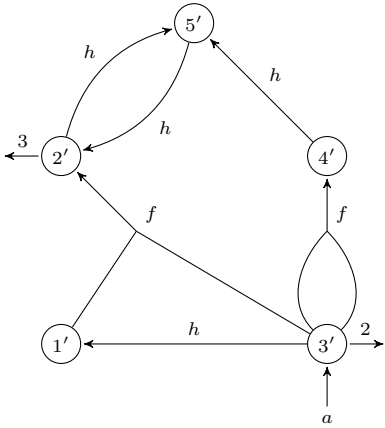
\begin{figure}[H]
	    \centerline{
		    \begin{tikzpicture}[node distance=2.5cm,bend angle=30,transform shape,scale=1]
		      \node[state] (q5) {$5'$};
		      \node[state, below left of=q5] (q2) {$2'$};	
		      \node[state, below right of=q5] (q4) {$4'$};	
		      \node[state, below of=q2] (q1) {$1'$};	
		      \node[state, below of=q4] (q3) {$3'$};	
	        \draw (q2) ++(-0.75cm,0cm) edge[<-] node {$3$} (q2);  
	        \draw (q3) ++(0.75cm,0cm) edge[above,<-] node {$2$} (q3); 
	        \draw (q3) ++(0cm,-1cm) node {$a$}  edge[->] (q3);  
	        \path[->]
	          (q2) edge[->,above left,bend left] node {$h$} (q5)
	          (q3) edge[->,above] node {$h$} (q1)
	          (q4) edge[->,above right] node {$h$} (q5)
	          (q5) edge[loop,->,below right,bend left] node {$h$} (q2)
	        ;
	        \draw (q1) ++(1cm,1.5cm)  edge[->] node[above right,pos=0] {$f$} (q2) edge[shorten >=0pt,] (q1) edge[shorten >=0pt,] (q3); 
	        \draw (q3) ++(0cm,1.5cm)  edge[->] node[above right,pos=0] {$f$} (q4) edge[shorten >=0pt,in=135,out=-135] (q3) edge[shorten >=0pt,in=45,out=-45] (q3); 
	      \end{tikzpicture}
	    }
	    \caption{The RWTA $A'$.}
	    \label{fig rwta autre}
	  \end{figure}
	\end{minipage}
	\hfill
	\begin{minipage}{0.45\linewidth}  
	  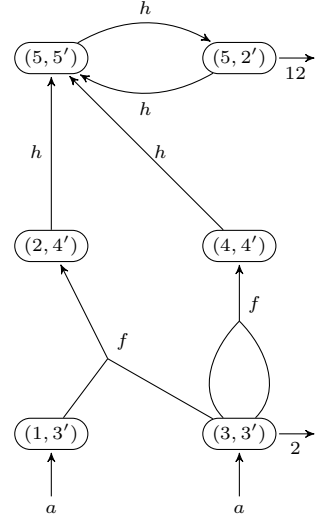
\begin{figure}[H]
	    \centerline{
		    \begin{tikzpicture}[node distance=2.5cm,bend angle=30,transform shape,scale=1]
		      \node[state,rounded rectangle] (q13') {$(1,3')$};
		      \node[state,rounded rectangle, right of=q13'] (q33') {$(3,3')$};	   
		      \node[state,rounded rectangle, above of=q13'] (q24') {$(2,4')$};	   
		      \node[state,rounded rectangle, above of=q33'] (q44') {$(4,4')$};	   
		      \node[state,rounded rectangle, above of=q24'] (q55') {$(5,5')$};	   
		      \node[state,rounded rectangle, right of=q55'] (q52') {$(5,2')$};	  
	        \draw (q33') ++(1cm,0cm) edge[<-] node {$2$} (q33'); 
	        \draw (q52') ++(1cm,0cm) edge[<-] node {$12$} (q52');  
	        \draw (q33') ++(0cm,-1cm) node {$a$}  edge[->] (q33');  
	        \draw (q13') ++(0cm,-1cm) node {$a$}  edge[->] (q13');  
	        \path[->]
	          (q24') edge[->,left] node {$h$} (q55')
	          (q44') edge[->,right] node {$h$} (q55')
	          (q55') edge[->,above,bend left] node {$h$} (q52')
	          (q52') edge[->,below,bend left] node {$h$} (q55')
	        ;
	        \draw (q13') ++(0.75cm,1cm)  edge[->] node[above right,pos=0] {$f$} (q24') edge[shorten >=0pt,] (q13') edge[shorten >=0pt,] (q33'); 
	        \draw (q33') ++(0cm,1.5cm)  edge[->] node[above right,pos=0] {$f$} (q44') edge[shorten >=0pt,in=135,out=-135] (q33') edge[shorten >=0pt,in=45,out=-45] (q33');
	      \end{tikzpicture}
	    }
	    \caption{The RWTA $A\times A'$.}
	    \label{fig ex rwta prod}
	  \end{figure}
	\end{minipage}
    
  Notice that series realized by RWTAs are not necessarily closed under classical regular operations.

  \subsection{Case of the $a$-Product}
    
    Let $a$ be a symbol in $\Sigma_0$.
    The $a$-\emph{product of} $\mathbb{P}_1$ and $\mathbb{P}_2$ is the series $\mathbb{P}_1\cdot_a\mathbb{P}_2$ defined for any tree $t$ by $\mathbb{P}_1\cdot_a \mathbb{P}_2(t)=\sum_{t_1,t_2\in T_\Sigma,t=t_1\cdot_a t_2} \nu_1(t_1)\times \nu_2(t_2)$. 
    
    Let us show that the $a$-product of two series realized by some RTWAs may not be realized by any RTWA.
    
    The \emph{image} of a tree series $\mathbb{P}$ is the set $\mathrm{Im}(\mathbb{P})=\{\alpha\in K\mid \exists t\in T_\Sigma, \mathbb{P}(t)=\alpha\}$.
    
    \begin{lemma}\label{lem im rwta finite}
      Let $A$ be a RWTA. Then:
      
      \centerline{
        $\mathrm{Im}(\mathbb{P}_A)$ is a finite set.
      }
    \end{lemma}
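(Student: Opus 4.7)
The plan is to observe that the weight of any tree under $A$ is determined entirely by the subset of states reached, and there are only finitely many such subsets. More precisely, by definition $\mathbb{P}_A(t)=\nu(\Delta(t))$ where $\Delta(t)\subseteq Q$. Since $Q$ is finite by the definition of a RWTA, the powerset $2^Q$ is finite, so there are at most $2^{|Q|}$ distinct values that $\Delta(t)$ can take as $t$ ranges over $T_\Sigma$.

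From there I would conclude immediately: the image $\mathrm{Im}(\mathbb{P}_A)$ is contained in the set $\{\nu(S)\mid S\subseteq Q\}$, where $\nu$ is the extension of the root weight function to subsets of $Q$ (namely $\nu(S)=\sum_{s\in S}\nu(s)$). This containing set has cardinality at most $2^{|Q|}$, hence is finite, and therefore $\mathrm{Im}(\mathbb{P}_A)$ is finite as well.

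There is essentially no obstacle here: the argument reduces to the finiteness of $2^Q$ and the fact that $\mathbb{P}_A$ factors through $\Delta:T_\Sigma\to 2^Q$ followed by the extended $\nu:2^Q\to M$. No induction on tree structure is needed because the factorization is baked into the definition of the series realized by a RWTA. The only small thing to mention explicitly is the empty-set case, handled by the convention $\nu(\emptyset)=0$ stated just after the definition of $\mathbb{P}_A$, so that even trees $t$ with $\Delta(t)=\emptyset$ contribute a weight lying in the finite set $\{\nu(S)\mid S\subseteq Q\}$.
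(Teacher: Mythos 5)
Your proof is correct and follows exactly the paper's own argument: both observe that $\mathbb{P}_A(t)=\nu(\Delta(t))$ factors through the finite powerset $2^Q$, so the image is contained in $\{\nu(S)\mid S\subseteq Q\}$, which has at most $2^{\mathrm{Card}(Q)}$ elements. Your explicit remark about the convention $\nu(\emptyset)=0$ is a small welcome precision that the paper leaves implicit.
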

    \begin{proof}
      Let $A=(\Sigma,Q,\nu,\delta)$.
      By definition, for any tree $t$ in $T_\Sigma$, $\mathbb{P}_{A}(t)=\sum_{q\in\Delta(t)}\nu(q)$.
      Consequently, $\mathbb{P}_A(t)$ belongs to the subset $\{\alpha\in K\mid \exists S\subset Q, \alpha=\nu(S)\}$ of $K$.
      Therefore, $\mathrm{Card}(\mathrm{Im}(\mathbb{P}_A))$ is less than $2^{\mathrm{Card}(Q)}$.
    \end{proof}
    
    \begin{proposition}
      Let $\Sigma$ be an alphabet and $a$ be a symbol in $\Sigma_0$.
      There exist formal tree series $\mathbb{P}_1$ and $\mathbb{P}_2$ such that $\mathrm{Im}(\mathbb{P}_1\cdot_a\mathbb{P}_2)$ is not finite.
    \end{proposition}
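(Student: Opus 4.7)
The plan is to exhibit a concrete counterexample, using series realized by RWTAs so that the result also delivers the intended non-closure content. I would pick the minimal alphabet that allows a single tree to have arbitrarily many $a$-decompositions: take $\Sigma_0=\{a\}$ and $\Sigma_1=\{g\}$, so that $T_\Sigma=\{g^n(a)\mid n\geq 0\}$ is totally ordered by the unary symbol $g$.

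Both $\mathbb{P}_1$ and $\mathbb{P}_2$ will then be taken as the characteristic series of $T_\Sigma$ over $(\mathbb{N},+,\times)$. Each is realized by the one-state RWTA with state $q$ of weight $\nu(q)=1$ and transitions $(q,a)$ and $(q,g,q)$, which places us squarely inside the class of RWTA-realizable series.

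The heart of the argument is the evaluation of $(\mathbb{P}_1\cdot_a\mathbb{P}_2)(g^n(a))$ from the definition. By a one-line induction on $k_1$, the identity $g^{k_1}(a)\cdot_a g^{k_2}(a)=g^{k_1+k_2}(a)$ holds, so the decompositions $g^n(a)=t_1\cdot_a t_2$ are in bijection with the $n+1$ pairs $(k_1,k_2)\in\mathbb{N}^2$ with $k_1+k_2=n$, each contributing weight $1\times 1$. Hence $(\mathbb{P}_1\cdot_a\mathbb{P}_2)(g^n(a))=n+1$, and letting $n$ vary shows $\{n+1\mid n\geq 0\}\subseteq\mathrm{Im}(\mathbb{P}_1\cdot_a\mathbb{P}_2)$, which is therefore infinite.

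No serious obstacle is expected; the single idea is to choose a linear (unary) alphabet so that $a$-products behave like additions of exponents, producing unboundedly many decompositions per tree. Combined with Lemma~\ref{lem im rwta finite}, this also gives, at no extra cost, the companion conclusion that $\mathbb{P}_1\cdot_a\mathbb{P}_2$ cannot be realized by any RWTA, which is the announced non-closure of the class under $a$-product.
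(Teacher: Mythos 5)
Your proof is correct, and it reaches the conclusion by a genuinely different witness than the paper. The paper works over $\Sigma_0=\{a,b\}$, $\Sigma_2=\{f\}$, takes $\mathbb{P}_1$ to be the characteristic series of $(f(a,b))^{*_a}$ and $\mathbb{P}_2$ a series supported on the single tree $a$, and asserts that the $a$-product assigns $2^{\mathrm{h}(t)}$ to each comb $t$ of height $\mathrm{h}(t)$, so the unboundedness there is meant to come from multiplying weights along the height. Your witness instead makes both factors the constant series $1$ over a unary alphabet and gets unboundedness purely from the number of factorizations $g^n(a)=g^{k_1}(a)\cdot_a g^{k_2}(a)$, which is $n+1$; this is an additive counting argument that follows immediately from the displayed definition of $\mathbb{P}_1\cdot_a\mathbb{P}_2$ as a sum over decompositions, and your identity $g^{k_1}(a)\cdot_a g^{k_2}(a)=g^{k_1+k_2}(a)$ is exactly the induction needed. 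Your route is arguably the more robust of the two: with $\mathbb{P}_2$ supported only on the tree $a$, the only factorization of $t$ contributing to the paper's sum is $t_1=t$, $t_2=a$, so a literal reading of the paper's own definition yields $2\,\mathbb{P}_1(t)$ rather than $2^{\mathrm{h}(t)}$ --- the paper's claimed value seems to presuppose an iterated or occurrence-wise substitution semantics that is not the one defined, whereas your computation needs no such reinterpretation. Both arguments finish identically, by citing Lemma~\ref{lem im rwta finite} to conclude that the product is not realizable by any RWTA. The only caveat, shared with the paper, is that both proofs fix a particular alphabet, so the proposition's opening ``Let $\Sigma$ be an alphabet'' must be read as existential in $\Sigma$ as well.
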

    \begin{proof}
      Let $\mathbb{K}=(\mathbb{N},+,\times,0,1)$.
      Let us consider the alphabet $\Sigma$ defined by $\Sigma_0=\{a,b\}$, $\Sigma_2=\{f\}$.
      Let us consider the tree language $L$ defined by $(f(a,b))^{*_a}$.
      Let us consider the series $\mathbb{P}_1$ and $\mathbb{P}_2$ defined for any tree $t$ in $T_\Sigma$ as follows:
      \begin{itemize}
        \item 
          $\mathbb{P}_1(t)=
            \left\{
              \begin{array}{l@{\ }l}
                1 & \text{ if } t\in L,\\
                0 & \text{ otherwise;}\\
              \end{array}
            \right.
          $
        \item 
          $\mathbb{P}_2(t)=
            \left\{
              \begin{array}{l@{\ }l}
                1 & \text{ if } t=a,\\
                0 & \text{ otherwise;}\\
              \end{array}
            \right.
          $
      \end{itemize}
      Let $A_1=(\Sigma,Q_1,\nu_1,\delta_1)$ be the RTWA defined by:
      \begin{itemize}
        \item $Q_1=\{0,1\}$,
        \item $\nu_1(0)=1$, $\nu_1(1)=0$,
        \item $\delta_1(a)=\{0\}$, $\delta_1(b)=\{1\}$, $\delta_1(f,0,1)=\{f\}$.
      \end{itemize}
      Let $A_2=(\Sigma,Q_2,\nu_2,\delta_2)$ be the RTWA defined by:
      \begin{itemize}
        \item $Q_2=\{0\}$,
        \item $\nu_2(0)=2$,
        \item $\delta_2(a)=\{0\}$.
      \end{itemize}
      It can be checked that:
      \begin{enumerate}
        \item the series $\mathbb{P}_1$ is realized by the RTWA $A_1$;
        \item the series $\mathbb{P}_2$ is realized by the RTWA $A_2$;
        \item the series $\mathbb{P}_1\cdot_a \mathbb{P}_2$ associates any tree $t$ in $L$ with the integer $2^{\mathrm{h}(t)}$, where $\mathrm{h}(t)$ is the height of $t$.
      \end{enumerate}
      Since $L$ is infinite, so is $\mathrm{Im}(\mathbb{P}_A)$.
      According to Lemma~\ref{lem im rwta finite}, $\mathbb{P}_1\cdot_a \mathbb{P}_2$ can not be realized by any RWTA. 
      \cqfd
    \end{proof}
    
    \begin{corollary}
      Let $\Sigma$ an alphabet and $a$ be a symbol in $\Sigma_0$.
      The tree series realized by some RWTAs are not closed under $a$-product. 
    \end{corollary}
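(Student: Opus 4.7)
The corollary is an immediate consequence of the preceding proposition together with Lemma~\ref{lem im rwta finite}, so my plan is simply to assemble these two ingredients. The proposition provides concrete formal tree series $\mathbb{P}_1$ and $\mathbb{P}_2$, each realized by some RWTA (the automata $A_1$ and $A_2$ exhibited in its proof), such that $\mathrm{Im}(\mathbb{P}_1\cdot_a\mathbb{P}_2)$ is infinite. Lemma~\ref{lem im rwta finite} guarantees that for any RWTA $A$, the image of $\mathbb{P}_A$ is finite (bounded by $2^{\mathrm{Card}(Q)}$, since its values are sums $\nu(S)$ over subsets $S\subset Q$).

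The argument is then a one-line contradiction: if the class of series realizable by RWTAs were closed under $a$-product, there would exist a RWTA realizing $\mathbb{P}_1\cdot_a\mathbb{P}_2$, and by Lemma~\ref{lem im rwta finite} its image would be finite, contradicting the conclusion of the proposition. Hence no such closure can hold.

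There is essentially no obstacle here; the substantive work has already been done in the proposition (exhibiting the explicit witness $\mathbb{P}_1,\mathbb{P}_2$ with $\mathbb{P}_1\cdot_a\mathbb{P}_2$ taking the value $2^{\mathrm{h}(t)}$ on trees $t\in L$) and in the lemma (the finiteness of the image). The only thing to check is that the witness given in the proposition indeed consists of series realized by RWTAs, which is stated as items (1) and (2) of that proof. I would therefore write the corollary as a two-sentence deduction invoking Lemma~\ref{lem im rwta finite} and the preceding proposition, with no further computation.
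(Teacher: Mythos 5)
Your proposal is correct and matches the paper's own reasoning: the corollary is stated there as an immediate consequence of the preceding proposition, whose proof already exhibits the RWTA-realizable witnesses $\mathbb{P}_1,\mathbb{P}_2$ and invokes Lemma~\ref{lem im rwta finite} to conclude that $\mathbb{P}_1\cdot_a\mathbb{P}_2$ is not realizable by any RWTA. Nothing further is needed.
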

    
    The same reasoning can be applied on the case of iterated product.
    
  \subsection{Quotient of a RWTA}
  
  Morphisms of RWTAs can be applied w.r.t. an equivalence relation in order to define quotients of RWTA.
  
  Given an equivalence relation $\sim$ over a set $Q$, we denote by $Q_\sim$ the set of equivalence classes of $\sim$. 
  Given a state $q$ in $Q$, we denote by $[q]_\sim$ the equivalence class of $q$ w.r.t. $\sim$, \emph{i.e.} $\{q'\in Q\mid q'\sim q\}$.
  
  \begin{definition}
    Let $A=(\Sigma,Q,\nu,\delta)$ be a RWTA and $\sim$ be an equivalence relation over $\nu$. 
    The \emph{quotient of} $A$ w.r.t. $\sim$ is the RWTA $A_\sim=(\Sigma,Q_\sim,\nu',\delta')$ defined by:
    \begin{itemize}
      \item $\forall C \in Q_\sim$, $\nu'(C)=\sum_{q\in C} \nu(C)$,
      \item $\forall C_1,\ldots,C_{k+1}\in Q_\sim$, $C_{k+1} \in\delta'(f,C_1,\ldots,C_k)$ $\Leftrightarrow$ $\forall i\leq k+1$, $\exists q_i\in C_i$, $q_{k+1}\in\delta(f,q_1,\ldots,q_k)$
    \end{itemize}
  \end{definition}
  
  Notice that the quotient of a RWTA $A$ does not necessary realize the same series as $A$.
  Nevertheless, in the following of this paper, we use particular relation that preserves the series while quotienting.
  
  \begin{definition}
    Let $A=(\Sigma,Q,\nu,\delta)$ be a RWTA and $q$ be a state in $Q$. 
    The \emph{down language of} $q$ is the language $L_q(A)$ defined by:
    
    \centerline{
      $L_q(A)=\{t\in T_\Sigma\mid q\in\Delta(t)\}$.
    }
  \end{definition}
  
  \begin{proposition}\label{prop series real somme lang down}
    The tree series realized by a RWTA $A=(\Sigma,Q,\nu,\delta)$ is equal to $\sum_{q\in Q} \nu(q) L_q(A)$.
  \end{proposition}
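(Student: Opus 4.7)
The plan is to unwind the definitions on both sides and check they agree coefficient-wise for every tree $t \in T_\Sigma$. On the right-hand side, each down language $L_q(A)$ is to be read as its characteristic tree series, so $\sum_{q \in Q} \nu(q)\, L_q(A)$ evaluated at $t$ returns $\sum_{q \in Q,\ t \in L_q(A)} \nu(q)$. On the left-hand side, $\mathbb{P}_A(t) = \nu(\Delta(t)) = \sum_{q \in \Delta(t)} \nu(q)$ by the extension of $\nu$ to subsets of $Q$. So the task reduces to identifying these two sums.

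First I would observe the defining equivalence $t \in L_q(A) \Leftrightarrow q \in \Delta(t)$, which is immediate from the definition of $L_q(A)$. This lets me reindex the summation:
\[
\sum_{q \in Q,\ t \in L_q(A)} \nu(q) \;=\; \sum_{q \in Q,\ q \in \Delta(t)} \nu(q) \;=\; \sum_{q \in \Delta(t)} \nu(q).
\]
Then by the extension of $\nu$ to subsets of $Q$ recalled in the preliminaries, the last sum is exactly $\nu(\Delta(t)) = \mathbb{P}_A(t)$. Since this holds for every $t \in T_\Sigma$, the two tree series coincide.

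There is no real obstacle here; the proof is essentially a Fubini-style exchange of the order of summation between states and trees. The only subtlety worth mentioning explicitly is the convention that $L_q(A)$ is identified with its characteristic series (so the formal sum $\sum_{q} \nu(q)\,L_q(A)$ makes sense in the monoid $\mathbb{M}$ of weights), and the boundary case $\Delta(t) = \emptyset$, which forces $\mathbb{P}_A(t) = 0$ and is consistent with no state $q$ satisfying $t \in L_q(A)$ contributing to the sum.
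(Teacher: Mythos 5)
Your proof is correct and follows essentially the same route as the paper's: both hinge on the equivalence $t \in L_q(A) \Leftrightarrow q \in \Delta(t)$ and a Fubini-style reindexing of the double sum over states and trees. The only cosmetic difference is that you argue coefficient-wise at each fixed $t$, whereas the paper manipulates the formal sum $\sum_{t} \nu(\Delta(t))\, t$ directly and invokes distributivity at the end; your handling of the empty-$\Delta(t)$ case is a welcome explicit touch.
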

  \begin{proof}
    By definition, it holds that $\mathbb{P}_A=\sum_{t\in T_\Sigma} \nu(\Delta(t)) t$.    
    Consequently, by definition of $\nu(\Delta(t))$, $\mathbb{P}_A=\sum_{t\in T_\Sigma}\sum_{q\in \Delta(t)}\nu(q) t$.
    Furthermore, since any tree $t$ such that $\Delta(t)$ is not empty is a tree that belongs to $L_q(t)$ for some state $q$ in $Q$, $\mathbb{P}_A=\sum_{q\in Q}\sum_{t\mid q\in\Delta(t)} \nu(q) t$.
    Thus, by definition of $L_q(A)$, $\mathbb{P}_A=\sum_{q\in Q}\sum_{t\in L_q(A)} \nu(q) t$.
    Consequently, since the coefficient $\nu(q)$ belongs to a semiring, by distributivity, $\mathbb{P}_A=\sum_{q\in Q} \nu(q) L_q(A)$.
    \cqfd
  \end{proof}
  
  \begin{definition}
    Let $A=(\Sigma,Q,\nu,\delta)$ be a RWTA. 
    Let $\sim$ be an equivalence relation over $Q$. 
    The relation $\sim$ is said to be \emph{down compatible} with $A$ if for any two states $q_1$ and $q_2$ in $Q$, it holds:
    
    \centerline{
      $q_1\sim q_2$ $\Rightarrow$ $L_{q_1}(A)=L_{q_2}(A)$.
    }  
  \end{definition}
  
  \begin{proposition}
    Let $A$ be a RWTA and $\sim$ be an equivalence relation down compatible with $A$. 
    Then:
    
    \centerline{
      $\mathbb{P}_A=\mathbb{P}_{A_\sim}$.
    }
  \end{proposition}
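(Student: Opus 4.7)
The plan is to leverage Proposition~\ref{prop series real somme lang down}, which rewrites any RWTA series as a weighted sum indexed by states of their down languages. Applied to $A$ we get $\mathbb{P}_A=\sum_{q\in Q}\nu(q)L_q(A)$, and applied to $A_\sim$ we get $\mathbb{P}_{A_\sim}=\sum_{C\in Q_\sim}\nu'(C)L_C(A_\sim)$. Since the equivalence classes of $\sim$ partition $Q$, I would regroup the first sum according to classes, obtaining $\mathbb{P}_A=\sum_{C\in Q_\sim}\sum_{q\in C}\nu(q)L_q(A)$. By down compatibility, all states inside a single class $C$ share the same down language in $A$, call it $L_C$; so $\sum_{q\in C}\nu(q)L_q(A)=(\sum_{q\in C}\nu(q))\,L_C=\nu'(C)\,L_C$ by the definition of $\nu'$. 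The whole proof then reduces to checking that $L_C=L_C(A_\sim)$ for every class $C\in Q_\sim$.

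To establish this equality, the main task is an auxiliary lemma proved by induction on the structure of a tree $t=f(t_1,\ldots,t_k)$: namely, $\Delta'(t)=\{[q]_\sim\mid q\in\Delta(t)\}$. The base case $k=0$ follows directly from the definition of $\delta'$, since $C\in\delta'(f)$ iff some $q\in C$ lies in $\delta(f)$. The inductive step unwinds $\Delta'(f(t_1,\ldots,t_k))=\delta'(f,\Delta'(t_1),\ldots,\Delta'(t_k))$ and applies the existential definition of $\delta'$ together with the induction hypothesis on each $t_i$. The subtle point is to go from ``there exist $q_i\in C_i$ with $C_i\in\Delta'(t_i)$'' to ``$q_i\in\Delta(t_i)$'': this is precisely where down compatibility is used, because $C_i\in\Delta'(t_i)$ gives by IH some $p_i\in C_i\cap\Delta(t_i)$, hence $t_i\in L_{p_i}(A)=L_{q_i}(A)$ and therefore $q_i\in\Delta(t_i)$. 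The converse inclusion is easier: any witnessing tuple in $\delta(f,q_1,\ldots,q_k)$ immediately produces a witnessing tuple of classes in $\delta'$.

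Once the lemma is in place, the identity $L_C(A_\sim)=L_q(A)$ for any (equivalently, every) $q\in C$ is immediate: $t\in L_C(A_\sim)\Leftrightarrow C\in\Delta'(t)\Leftrightarrow\exists q\in C, q\in\Delta(t)$, which by down compatibility collapses to the condition that any fixed representative of $C$ lies in $\Delta(t)$. Substituting back into the regrouped expression for $\mathbb{P}_A$ yields $\mathbb{P}_A=\sum_{C\in Q_\sim}\nu'(C)L_C(A_\sim)=\mathbb{P}_{A_\sim}$ by Proposition~\ref{prop series real somme lang down} applied in the other direction.

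The only real obstacle is the inductive step of the lemma, where the definition of $\delta'$ quantifies existentially over class representatives. Without down compatibility, picking different representatives of the same class could yield classes in $\Delta'(t)$ that correspond to no state of $\Delta(t)$, and the identification $L_C(A_\sim)=L_q(A)$ would fail. Down compatibility is used precisely to close this gap by guaranteeing that the choice of representative is immaterial for membership in a down language.
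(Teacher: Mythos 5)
Your argument follows the same overall route as the paper's: both apply Proposition~\ref{prop series real somme lang down} to $A$ and to $A_\sim$, regroup the sum over $Q$ by equivalence classes, and use down compatibility together with the definition of $\nu'$ to match the two expressions term by term. The difference is that you explicitly isolate and prove the step on which the whole argument actually hinges, namely that $L_C(A_\sim)=L_q(A)$ for every class $C$ and every representative $q\in C$, via the structural induction $\Delta'(t)=\{[q]_\sim\mid q\in\Delta(t)\}$. The paper's proof simply writes $\mathbb{P}_{A_\sim}=\sum_{C\in Q_\sim}\nu'(C)L_C(A)$ and never justifies that the down languages computed in the quotient coincide with those computed in $A$ (as written, $L_C(A)$ is not even well-defined, since $C$ is a class of states of $A$ rather than a state of $A$). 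Your induction supplies exactly this missing justification, and you correctly locate the one place where down compatibility is indispensable: in the inclusion $\Delta'(t)\subseteq\{[q]_\sim\mid q\in\Delta(t)\}$, the existential quantification over representatives in the definition of $\delta'$ could otherwise produce classes with no counterpart in $\Delta(t)$, and the identification of down languages would fail. So your proof is correct, takes the same decomposition as the paper, and is more complete than the proof the paper gives.
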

  \begin{proof}
    Let $A=(\Sigma,Q,\nu,\delta)$ and $A_\sim=(\Sigma,Q_\sim,\nu',\delta')$.
    According to Proposition~\ref{prop series real somme lang down}, $\mathbb{P}_A=\sum_{q\in Q} \nu(q) L_q(A)$ and ${\mathbb{P}_A}_\sim=\sum_{C\in Q_\sim} \nu'(C) L_C(A)$.
    By definition of $\nu'$, ${\mathbb{P}_A}_\sim=\sum_{C\in Q_\sim} (\sum_{q\in C} \nu(q)) L_C(A)$.
    Since $\sim$ is down compatible, for any state $C$ in $Q_\sim$, for any two states $q_1$ and $q_2$ in $C$, $L_{q_1}(A)=L_{Q_2}(A)$.
    Therefore ${\mathbb{P}_A}_\sim=\sum_{C\in Q_\sim} (\sum_{q\in C} \nu(q) L_q(A))$.
    Moreover, since $\sim$ is an equivalence relation, any state of $Q$ belongs to one and only one state $C$ in $Q_\sim$.
    Consequently, ${\mathbb{P}_A}_\sim=\sum_{q\in C} \nu(q) L_q(A)=\mathbb{P}_A$.
  \end{proof}  
  
  \begin{example}
    Let us consider the RWTA $A''=A\times A'$ represented in Figure~\ref{fig ex rwta prod}.
    Let us consider the equivalence relation $\sim$ over the set of states of $A''$ defined by $(q_1,q'_1)\sim(q_2,q'_2)$ $\Leftrightarrow$ $q'_1=q'_2$.
    It can be shown that $\sim$ is down compatible with $A$. 
    The quotient $A''_\sim$ is represented in Figure~\ref{fig ex quot}.
  \end{example} 
  
  \begin{figure}[H]
    \centerline{
	    \begin{tikzpicture}[node distance=2.5cm,bend angle=30,transform shape,scale=1]
	      \node[state,rounded rectangle] (q33') {$\{(1,3'),(3,3')\}$};	  
	      \node[state,rounded rectangle, above of=q13'] (q44') {$\{(2,4'),(4,4')\}$};	   
	      \node[state,rounded rectangle, above left of=q24'] (q55') {$\{(5,5')\}$};	   
	      \node[state,rounded rectangle, above right of=q24'] (q52') {$\{(5,2')\}$};	  
        \draw (q33') ++(1.5cm,0cm) edge[<-] node {$2$} (q33'); 
        \draw (q52') ++(1.25cm,0cm) edge[<-] node {$12$} (q52');  
        \draw (q33') ++(0cm,-1cm) node {$a$}  edge[->] (q33');  
        \path[->]
          (q44') edge[->,below left] node {$h$} (q55')
          (q55') edge[->,above,bend left] node {$h$} (q52')
          (q52') edge[->,below,bend left] node {$h$} (q55')
        ;
        \draw (q33') ++(0cm,1.5cm)  edge[->] node[above right,pos=0] {$f$} (q44') edge[shorten >=0pt,in=135,out=-135] (q33') edge[shorten >=0pt,in=45,out=-45] (q33');
      \end{tikzpicture}
    }
    \caption{The RWTA $A''_\sim$.}
    \label{fig ex quot}
  \end{figure}
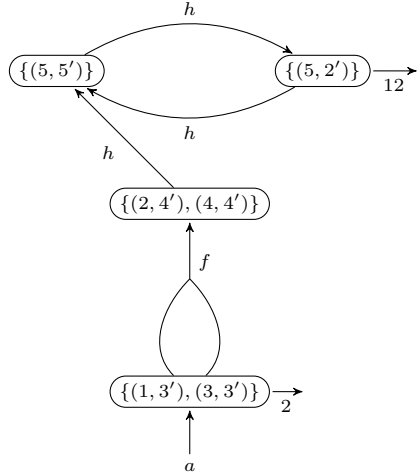
  
  Now that we have defined the notion of RWTA, let us apply it on tree kernel computations.
  
\section{Subtree Kernel}\label{sec subtree Ker}

  In this section, we show how to efficiently compute the subtree kernel of two finite tree languages using RWTAs.
  We first associate any tree with a RWTA that realizes its subtree series.
  
  \subsection{Subtree Automaton of a Tree}  
  
  \begin{definition}
    Let $\Sigma$ be an alphabet.
    Let $t$ be a tree in $T_\Sigma$.
    The \emph{subtree automaton} associated with $t$ is the RWTA $A_t=(\Sigma,Q,\nu,\delta)$ defined by:
    \begin{itemize}
      \item $Q=\mathrm{SubTreeSet}(t^\sharp)$,
      \item $\forall q\in Q$, $\nu(q)=1$,
      \item $\forall f\in \Sigma_{t^\sharp}$, $\forall t_1,\ldots,t_{k+1}\in Q$, $t_{k+1}\in\delta(\mathrm{h}(f),t_1,\ldots,t_k)$ $\Leftrightarrow$ $t_{k+1}=f(t_1,\ldots,t_k)$.
    \end{itemize}
  \end{definition}
  
  \begin{example}
    Let us consider the tree $t_1=f(h(a),f(h(a),b))$ defined in Example~\ref{ex subset tree}.
    Then $t^\sharp=f_1(h_2(a_3),f_4(h_5(a_6),b_7))$.
    The RWTA $A_{t_1}$ is represented in Figure~\ref{fig ex At}, where all the root weights, equal to $1$, are not represented.
  \end{example} 
  
  \begin{figure}[H]
    \centerline{
	    \begin{tikzpicture}[node distance=2.5cm,bend angle=30,transform shape,scale=1]
	      \node[state,rounded rectangle] (a3) {$a_3$};
	      \node[state, above of=a3,rounded rectangle] (h2) {$h_2(a_3)$};	
	      \node[state, below right of=a3,rounded rectangle] (a6) {$a_6$};	
	      \node[state, above of=a6,rounded rectangle] (h5) {$h_5(a_6)$};	
	      \node[state, above right of=h5,rounded rectangle] (f4) {$f_4(h_5(a_6),b_7)$};	
	      \node[state, right of=a6,rounded rectangle] (b7) {$b_7$};	   
	      \node[state, above left of=f4,rounded rectangle] (t1) {$t_1^\sharp$};	 
        \draw (a3) ++(0cm,-1cm) node {$a$}  edge[->] (a3);  
        \draw (a6) ++(0cm,-1cm) node {$a$}  edge[->] (a6);  
        \draw (b7) ++(0cm,-1cm) node {$b$}  edge[->] (b7);  
        \path[->]
          (a3) edge[->,below left] node {$h$} (h2)
          (a6) edge[->,below left] node {$h$} (h5)
        ;
        \draw (h5) ++(1cm,0.5cm)  edge[->] node[right,pos=0] {$f$} (f4) edge[shorten >=0pt] (h5) edge[shorten >=0pt] (b7);
        \draw (t1) ++(0cm,-1cm)  edge[->] node[above left,pos=0] {$f$} (t1) edge[shorten >=0pt] (h2) edge[shorten >=0pt] (f4);
      \end{tikzpicture}
    }
    \caption{The RWTA $A_{t_1}$.}
    \label{fig ex At}
  \end{figure}
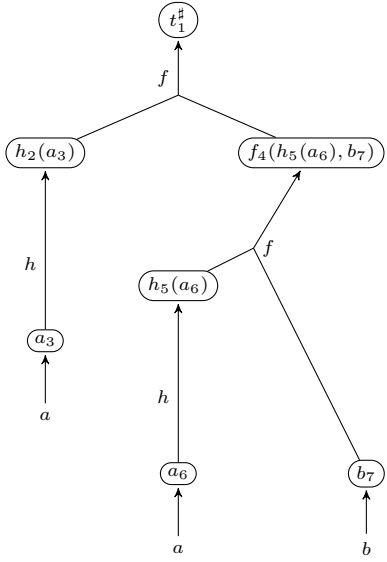
  
  \begin{lemma}\label{lem At bon serie}
    Let $\Sigma$ be an alphabet.
    Let $t$ be a tree in $T_\Sigma$
    Then:
    
    \centerline{
      $\mathbb{P}_{A_t}= \mathrm{SubTreeSeries}_t$.
    }
  
  \end{lemma}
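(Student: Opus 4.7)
The plan is to compute $\mathbb{P}_{A_t}(s)$ for an arbitrary tree $s\in T_\Sigma$ and match it with $\mathrm{SubTreeSeries}_t(s)$. Since $\nu(q)=1$ for every state $q$, we have $\mathbb{P}_{A_t}(s)=\nu(\Delta(s))=\mathrm{Card}(\Delta(s))$, so the whole statement boils down to a combinatorial identity between $\mathrm{Card}(\Delta(s))$ and the multiplicity of $s$ in $\mathrm{SubTreeSeries}_t$. My strategy is to first give a structural description of $\Delta(s)$ in terms of indexed subtrees of $t^\sharp$, and then count.

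First I would prove, by induction on the structure of $s=g(s_1,\ldots,s_k)$, that
\centerline{$\Delta(s)=\{u\in \mathrm{SubTreeSet}(t^\sharp)\mid \mathrm{h}(u)=s\}$.}
For the base case $k=0$, the transitions of $A_t$ at arity $0$ put into $\delta(g)$ exactly the indexed constants $u\in Q$ with $\mathrm{h}(u)=g$. For the inductive step, by definition $\Delta(s)=\delta(g,\Delta(s_1),\ldots,\Delta(s_k))$, and the induction hypothesis rewrites each $\Delta(s_j)$ as the set of indexed subtrees of $t^\sharp$ mapped to $s_j$ by $\mathrm{h}$. The defining equivalence of $\delta$ then says that $u\in\delta(g,u_1,\ldots,u_k)$ iff $u$ is an indexed subtree of the form $f(u_1,\ldots,u_k)$ in $Q$ with $\mathrm{h}(f)=g$, which rearranges exactly into $\{u\in Q\mid \mathrm{h}(u)=s\}$.

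Second, I would prove by induction on the structure of $t$ that
\centerline{$\mathrm{Card}\bigl(\{u\in\mathrm{SubTreeSet}(t^\sharp)\mid \mathrm{h}(u)=s\}\bigr)=\mathrm{SubTreeSeries}_t(s)$,}
for every $s\in T_\Sigma$. Writing $t=f(t_1,\ldots,t_k)$, the set $\mathrm{SubTreeSet}(t^\sharp)$ decomposes as $\{t^\sharp\}\cup\bigcup_j \mathrm{SubTreeSet}(t_j^\sharp)$, and because the prefix indexing gives every node in $t^\sharp$ a distinct label, this union is disjoint. Counting the elements of this decomposition whose image under $\mathrm{h}$ equals $s$ gives the recurrence $\mathrm{SubTreeSeries}_t(s)=[t=s]+\sum_j \mathrm{SubTreeSeries}_{t_j}(s)$, matching the inductive definition of $\mathrm{SubTreeSeries}_t$. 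Combining both steps yields $\mathbb{P}_{A_t}(s)=\mathrm{SubTreeSeries}_t(s)$.

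The only delicate point is the disjointness of the union in the second induction; it is what ensures that the cardinality computation splits additively and that multiplicities are correctly counted. This is precisely the role of the indexing by prefix position: without the sharp decoration, a common subtree shared between $t_i$ and $t_j$ would be counted only once in $\mathrm{SubTreeSet}(t)$, whereas the weighted series must count its two occurrences, and it is $t^\sharp$ that restores this bookkeeping inside the state set $Q$.
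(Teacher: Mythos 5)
Your proof is correct, but it takes a genuinely different route from the paper's. The paper argues by induction on the structure of $t$ at the level of automata: it decomposes $A_t$ as the root state $t^\sharp$ together with the sub-automata $A_{t_i}$ (whose state sets are pairwise disjoint thanks to the indexing), deduces $\mathbb{P}_{A_t}=t+\sum_i \mathbb{P}_{A_{t_i}}$ much as in its sum-of-automata proposition, and concludes by the inductive definition of $\mathrm{SubTreeSeries}_t$. You instead fix an arbitrary input tree $s$, characterize $\Delta(s)$ as $\{u\in\mathrm{SubTreeSet}(t^\sharp)\mid \mathrm{h}(u)=s\}$, and reduce the lemma to counting indexed subtrees with a given $\mathrm{h}$-image; the disjointness of the decomposition of $\mathrm{SubTreeSet}(t^\sharp)$, which you rightly single out as the crux, is exactly the point the paper's one-line step ``$\mathbb{P}_{A_t}=t+\sum_i\mathbb{P}_{A_{t_i}}$'' glosses over. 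Your intermediate characterization of $\Delta(s)$ is precisely the paper's Lemma~\ref{lem arbre ht arriv dans t}, which it proves separately afterwards; so your route is more self-contained and delivers that lemma as a by-product, at the cost of a slightly longer double induction, whereas the paper's version is terser and leans (implicitly) on its earlier closure-under-sum machinery. One cosmetic caveat shared with the paper: the subtrees of $t^\sharp$ below the $j$-th child are $t_j$ with offset indices rather than literally $\mathrm{SubTreeSet}(t_j^\sharp)$; this is an index-renaming that preserves $\mathrm{h}$-fibers and cardinalities, so your counting argument is unaffected, but it is worth a remark.
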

  \begin{proof} 
    Let us set $A_t=(\Sigma,Q,\nu,\delta)$ and $A_{t_i}=(\Sigma,Q_i,\nu_i,\delta_i)$ for $1\leq i\leq k$. 
    Notice that by definition: 
    
    \centerline{
      $Q=\{t\}\cup \bigcup_{1\leq i\leq k}Q_i$ and $\delta=\{(t,f,t_1,\ldots,t_k)\}\cup \bigcup_{1\leq i\leq k}\delta_i$.
    }
    
    \noindent Consequently, $\mathbb{P}_{A_{t}}=t + \sum_{1\leq i \leq k} \mathbb{P}_{A_{t_i}}$.
    By definition, $\mathrm{SubTreeSeries}_t=t+\sum_{1\leq j \leq k} \mathrm{SubTreeSeries}_{t_j}$.  
    Furthermore, by induction hypothesis, $\mathbb{P}_{A_{t_i}}= \mathrm{SubTreeSeries}_{t_i}$.
    Therefore it holds that 
    
    \centerline{ $\mathbb{P}_{A_{t}}=t+\sum_{1\leq j \leq k} \mathrm{SubTreeSeries}_{t_j}=\mathrm{SubTreeSeries}_t$.}
    \cqfd
  \end{proof}
  
  Another RWTA can be defined in order to realize the subtree series associated with a tree.
  This RWTA needs less space since its states are exactly its subsets.
  
  \begin{definition}
    Let $\Sigma$ be an alphabet.
    Let $t$ be a tree in $T_\Sigma$.
    The \emph{sequential subtree automaton} associated with $t$ is the RWTA $\mathrm{seq}(A_t)=(\Sigma,Q,\nu,\delta)$ defined by:
    \begin{itemize}
      \item $Q=\mathrm{SubTreeSet}(t)$,
      \item $\forall t'\in Q$, $\nu(t')=\mathrm{SubTreeSeries}_t(t')$,
      \item $\forall f\in \Sigma$, $\forall t_1,\ldots,t_{k+1}\in Q$, $t_{k+1}\in\delta(f,t_1,\ldots,t_k)$ $\Leftrightarrow$ $t_{k+1}=f(t_1,\ldots,t_k)$.
    \end{itemize}
  \end{definition}
  
  \begin{example}
    Let us consider the tree $t_1=f(h(a),f(h(a),b))$ defined in Example~\ref{ex subset tree}.
    The RWTA $\mathrm{seq}(A_{t_1})$ is represented in Figure~\ref{fig ex seq At}.
  \end{example} 
  
  \begin{figure}[H]
    \centerline{
	    \begin{tikzpicture}[node distance=2.5cm,bend angle=30,transform shape,scale=1]
	      \node[state,rounded rectangle] (a) {$a$};
	      \node[state, above of=a,rounded rectangle] (h) {$h(a)$};		
	      \node[state, right of=a,rounded rectangle] (b) {$b$};	   
	      \node[state, right of=h,rounded rectangle] (f) {$f(h(a),b)$};	
	      \node[state, above left of=f,rounded rectangle] (t1) {$t_1$};	 
        \draw (a) ++(-0.75cm,0cm) edge[<-] node {$2$} (a); 
        \draw (h) ++(-1cm,0cm) edge[<-] node {$2$} (h); 
        \draw (b) ++(0.75cm,0cm) edge[<-] node {$1$} (b); 
        \draw (t1) ++(0cm,0.5cm) edge[<-] node {$1$} (t1);
        \draw (f) ++(1.5cm,0cm) edge[<-] node {$1$} (f);   
        \draw (a) ++(0cm,-1cm) node {$a$}  edge[->] (a);  
        \draw (b) ++(0cm,-1cm) node {$b$}  edge[->] (b);  
        \path[->]
          (a) edge[->,below left] node {$h$} (h)
        ;
        \draw (h) ++(1cm,0cm)  edge[->] node[above,pos=0] {$f$} (f) edge[shorten >=0pt] (h) edge[shorten >=0pt] (b);
        \draw (t1) ++(0cm,-1cm)  edge[->] node[above left,pos=0] {$f$} (t1) edge[shorten >=0pt] (h) edge[shorten >=0pt] (f);
      \end{tikzpicture}
    }
    \caption{The RWTA $\mathrm{seq}(A_{t_1})$.}
    \label{fig ex seq At}
  \end{figure}
  
  However, the sequential subtree automaton needs the tree series to be known in order to compute it.
  Nevertheless, we show how to compute it from a quotient of the subtree automaton.
  Once this tree computed, it can be reduced using the equivalence $\sim_\mathrm{h}$. 
  Furthermore, this RWTA is isomorphic to the one obtained by subset construction.
  Consequently, we compute a sequential RWTA the number of states oh which is equal to the number of its different subtrees.
  We first show that $\sim_\mathrm{h}$ is down compatible with the subtree automaton, then we show that its application leads to the computation of the sequential subtree automaton.
  
  \begin{lemma}\label{lem arbre ht arriv dans t}
    Let $\Sigma$ be a graded alphabet.
    Let $t$ be a tree in $T_\Sigma$.
    Let $A_t=(\Sigma,Q,\nu,\delta)$.
    For any tree $r$ in $\mathrm{SubTreeSet}(t)$, it holds:
    
    \centerline{
      $\Delta(r)=\{r'\in Q\mid \mathrm{h}(r')=r\}$.
    }
  \end{lemma}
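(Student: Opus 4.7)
The plan is to proceed by structural induction on the tree $r = f(r_1,\ldots,r_k)$ in $\mathrm{SubTreeSet}(t)$, using the definition of $\Delta$ together with the characterization of transitions of $A_t$: a state $r' \in Q$ lies in $\delta(f, q_1, \ldots, q_k)$ if and only if there exists an indexed symbol $f_j \in \Sigma_{t^\sharp}$ with $\mathrm{h}(f_j) = f$ such that $r' = f_j(q_1, \ldots, q_k)$ and $r' \in Q = \mathrm{SubTreeSet}(t^\sharp)$.

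For the base case $k=0$, a constant symbol $c \in \Sigma_0$ that occurs as a subtree of $t$ yields $\Delta(c) = \delta(c)$, and by the transition definition, $\delta(c)$ consists precisely of the indexed constants $c_j \in Q$ such that $\mathrm{h}(c_j) = c$, which is exactly $\{r' \in Q \mid \mathrm{h}(r') = c\}$.

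For the inductive step, I would apply the induction hypothesis to each $r_i$ to get $\Delta(r_i) = \{q \in Q \mid \mathrm{h}(q) = r_i\}$, then expand $\Delta(r) = \delta(f, \Delta(r_1), \ldots, \Delta(r_k)) = \bigcup_{(q_1,\ldots,q_k) \in \Delta(r_1) \times \cdots \times \Delta(r_k)} \delta(f, q_1, \ldots, q_k)$. The forward inclusion is then straightforward: any $r'' \in \Delta(r)$ has the form $f_j(q_1,\ldots,q_k)$ with $\mathrm{h}(f_j)=f$ and $\mathrm{h}(q_i) = r_i$, hence $\mathrm{h}(r'') = f(r_1,\ldots,r_k) = r$. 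For the reverse inclusion, given $r'' \in Q$ with $\mathrm{h}(r'') = r$, I decompose $r'' = f_j(q_1,\ldots,q_k)$ for some indexed $f_j$; since $r''$ lies in $\mathrm{SubTreeSet}(t^\sharp)$, its immediate children $q_1,\ldots,q_k$ also lie in $Q$, and $\mathrm{h}(f_j)=f$, $\mathrm{h}(q_i)=r_i$ follow from $\mathrm{h}(r'')=r$. The induction hypothesis then places each $q_i$ in $\Delta(r_i)$, and the transition $r'' \in \delta(f, q_1, \ldots, q_k)$ shows $r'' \in \Delta(r)$.

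The only subtle point, and the one I would take care to address explicitly, is that in the reverse inclusion the children $q_i$ of $r''$ must indeed be elements of $Q$; this follows because the $q_i$ are subtrees of $r''$, which is itself a subtree of $t^\sharp$, so transitivity of the subtree relation gives $q_i \in \mathrm{SubTreeSet}(t^\sharp) = Q$. Beyond that, the proof is a direct, if notationally careful, unfolding of the definitions of $A_t$, the extended $\delta$, and $\mathrm{h}$.
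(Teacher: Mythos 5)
Your proof is correct and follows essentially the same route as the paper's: a structural induction on $r$ that unfolds the definition of $\Delta$ and the transition relation of $A_t$, concluding that $\Delta(r)$ is exactly the set of indexed subtrees projecting to $r$ under $\mathrm{h}$. You are merely more explicit than the paper about the two inclusions and about why the children of an indexed subtree again lie in $Q$, which the paper's one-line set computation leaves implicit.
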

  \begin{proof}
    By induction over the structure of $r=f(r_1,\ldots,r_k)$.
    By definition of $\Delta$, $\Delta(r)=\delta(f,\Delta(r_1),\ldots,\Delta(r_k))$.
    By induction hypothesis, $\Delta(r_i)=\{r'_i\in Q\mid \mathrm{r'_i}=r_i\}$.
    Thus, $\Delta(r)=\{f_j(r'_1,\ldots,r'_k)\in Q\mid \mathrm{h}(f_j)=f\wedge r_i=\mathrm{h}(r'_i)\}$.
    Hence $\Delta(r)=\{r'\in Q\mid \mathrm{h}(r')=r\}$.
    \cqfd
  \end{proof}
  
  As a direct consequence, for any state $r$ in $Q$, any tree $r'$ in $L_r(A_t)$ satisfies $\mathrm{h}(r)=r'$.
  
  \begin{corollary}\label{cor arbre ht arriv dans t}
    Let $\Sigma$ be a graded alphabet.
    Let $t$ be a tree in $T_\Sigma$.
    Let $A_t=(\Sigma,Q,\nu,\delta)$.
    Then for any state $r$ in $Q$, $L_r(A_t)=\{\mathrm{h}(r)\}$.
  \end{corollary}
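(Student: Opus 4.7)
The plan is to establish the two inclusions $\{\mathrm{h}(r)\}\subseteq L_r(A_t)$ and $L_r(A_t)\subseteq\{\mathrm{h}(r)\}$ separately. The first one is essentially an immediate application of Lemma~\ref{lem arbre ht arriv dans t}: since $r\in Q=\mathrm{SubTreeSet}(t^\sharp)$, the tree $\mathrm{h}(r)$ lies in $\mathrm{SubTreeSet}(t)$, so the lemma applies to $s=\mathrm{h}(r)$ and yields $\Delta(\mathrm{h}(r))=\{r'\in Q\mid \mathrm{h}(r')=\mathrm{h}(r)\}$. Taking $r'=r$ shows $r\in\Delta(\mathrm{h}(r))$, hence $\mathrm{h}(r)\in L_r(A_t)$.

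For the reverse inclusion, I need to show that if $s\in T_\Sigma$ satisfies $r\in\Delta(s)$, then $s=\mathrm{h}(r)$. If it were guaranteed that $s\in\mathrm{SubTreeSet}(t)$, the lemma would give this immediately since its characterization of $\Delta(s)$ forces $\mathrm{h}(r)=s$. The only subtlety is that $s$ is an \emph{arbitrary} tree in $T_\Sigma$, so I first need to rule out the possibility that $\Delta(s)$ contains $r$ for some $s\notin\mathrm{SubTreeSet}(t)$.

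I would handle both issues simultaneously by a short structural induction on $s=g(s_1,\ldots,s_m)$, proving the stronger statement: for every tree $s\in T_\Sigma$ and every state $r'\in\Delta(s)$, we have $\mathrm{h}(r')=s$. The base case is direct from the definition of $\delta$ on constants, since $r'\in\delta(g)$ forces $r'$ to be an indexed copy $g_j$ of $g$, whence $\mathrm{h}(r')=g=s$. In the inductive step, membership $r'\in\delta(g,\Delta(s_1),\ldots,\Delta(s_m))$ forces $r'=f_j(r'_1,\ldots,r'_m)$ with $\mathrm{h}(f_j)=g$ and $r'_i\in\Delta(s_i)$; applying the induction hypothesis to each child gives $\mathrm{h}(r'_i)=s_i$, and reassembling yields $\mathrm{h}(r')=g(s_1,\ldots,s_m)=s$. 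Specialising to $r'=r$ gives the required inclusion.

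The main (and only) obstacle is this extension to arbitrary $s\in T_\Sigma$, but it is really minor since the inductive step mirrors the very argument already used in the proof of Lemma~\ref{lem arbre ht arriv dans t}; the induction is terminating because the transition set of $A_t$ only reconstructs indexed subtrees of $t^\sharp$, so no "spurious" tree can produce a non-empty $\Delta$.
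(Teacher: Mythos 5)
Your proof is correct, and it follows essentially the same route as the paper: the paper states the corollary as a ``direct consequence'' of Lemma~\ref{lem arbre ht arriv dans t}, with the inclusion $\mathrm{h}(r)\in L_r(A_t)$ read off from the lemma and the reverse inclusion asserted in the sentence preceding the corollary. Your one genuine addition is the structural induction showing that \emph{every} $r'\in\Delta(s)$ satisfies $\mathrm{h}(r')=s$ for an \emph{arbitrary} tree $s\in T_\Sigma$, not merely for $s\in\mathrm{SubTreeSet}(t)$; this is exactly the point the paper glosses over (its lemma only characterises $\Delta$ on subtrees of $t$), and your induction, which mirrors the lemma's own proof, closes that small gap cleanly.
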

  
  \begin{lemma}
    Let $\Sigma$ be a graded alphabet.
    Let $t$ be a tree in $T_\Sigma$.
    Then:
    
    \centerline{
      $\sim_\mathrm{h}$ is down compatible with $A_t$.
    }
  \end{lemma}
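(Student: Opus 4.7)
The plan is to reduce the lemma directly to Corollary~\ref{cor arbre ht arriv dans t}, which has already done the structural work. Since down compatibility asks us to show that $q_1 \sim_\mathrm{h} q_2$ implies $L_{q_1}(A_t) = L_{q_2}(A_t)$, and since Corollary~\ref{cor arbre ht arriv dans t} gives the explicit description $L_r(A_t)=\{\mathrm{h}(r)\}$ for every state $r \in Q$, the verification becomes essentially a one-line chain of equalities.

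Concretely, I would proceed as follows. First I would fix $A_t=(\Sigma,Q,\nu,\delta)$ and take two arbitrary states $q_1, q_2 \in Q$ with $q_1 \sim_\mathrm{h} q_2$. By definition of $\sim_\mathrm{h}$ recalled in the preliminaries, this means precisely $\mathrm{h}(q_1) = \mathrm{h}(q_2)$. Then I would invoke Corollary~\ref{cor arbre ht arriv dans t} twice, once at $q_1$ and once at $q_2$, to obtain $L_{q_1}(A_t) = \{\mathrm{h}(q_1)\}$ and $L_{q_2}(A_t) = \{\mathrm{h}(q_2)\}$. Chaining the three equalities yields $L_{q_1}(A_t) = L_{q_2}(A_t)$, which is exactly the defining condition of down compatibility, so the conclusion follows.

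There is no real obstacle here since all the heavy lifting was already done in Lemma~\ref{lem arbre ht arriv dans t} (the structural induction showing that the set of states reachable by $r$ through $\Delta$ is exactly the $\mathrm{h}$-preimage of $r$ in $Q$) and its corollary. The only point worth being careful about is to make sure we are quantifying over states of $A_t$ (i.e.\ indexed subtrees of $t^\sharp$) rather than arbitrary trees over $\Sigma$; this is automatic because $\sim_\mathrm{h}$ is defined on $Q \subseteq T_{\Sigma_{t^\sharp}}$. The whole argument fits in a few lines.
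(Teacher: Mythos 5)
Your proof is correct and follows exactly the paper's own argument: both reduce the claim to Corollary~\ref{cor arbre ht arriv dans t} (that $L_r(A_t)=\{\mathrm{h}(r)\}$ for every state $r$) and then chain $q_1\sim_\mathrm{h} q_2 \Rightarrow \mathrm{h}(q_1)=\mathrm{h}(q_2) \Rightarrow L_{q_1}(A_t)=L_{q_2}(A_t)$. Nothing is missing.
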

  \begin{proof}
    Let $A_t=(\Sigma,Q,\nu,\delta)$.
    According to Corollary~\ref{cor arbre ht arriv dans t}, $L_r(A_t)=\{\mathrm{h}(r)\}$.
    Consequently, for any two states $r_1$ and $r_2$ in $Q$, $r_1\sim_\mathrm{h}r_2$ $\Rightarrow$ $\mathrm{h}(r_1)=\mathrm{h}(r_2)$ $\Rightarrow$ $L_{r_1}(A_t)=L_{r_2}(A_t)$.
    \cqfd
  \end{proof}
  
  \begin{proposition}
    Let $\Sigma$ be a graded alphabet.
    Let $t$ be a tree in $T_\Sigma$.
    Then:
    
    \centerline{
      The RWTA $\mathrm{seq}(A_t)$ is isomorphic to ${{A_t}}_{\sim_{h}}$.
    }    
  \end{proposition}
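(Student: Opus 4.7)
The plan is to exhibit an explicit bijection between the state sets of the two automata and then to check that it preserves alphabet, transitions and weights, producing a morphism whose inverse is constructed symmetrically. Both candidate isomorphism maps are forced by the function $\mathrm{h}$: the states of $\mathrm{seq}(A_t)$ are exactly $\mathrm{SubTreeSet}(t)$, while the states of ${A_t}_{\sim_\mathrm{h}}$ are the $\sim_\mathrm{h}$-classes of $\mathrm{SubTreeSet}(t^\sharp)$. Since two indexed subtrees are $\sim_\mathrm{h}$-equivalent iff they have the same image under $\mathrm{h}$, the map $\mu : \mathrm{SubTreeSet}(t) \rightarrow \mathrm{SubTreeSet}(t^\sharp)_{\sim_\mathrm{h}}$ sending $r$ to $\{r'\in \mathrm{SubTreeSet}(t^\sharp) \mid \mathrm{h}(r')=r\}$ is a well-defined bijection, with inverse $[r']_{\sim_\mathrm{h}} \mapsto \mathrm{h}(r')$.

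Next I would verify the remaining morphism conditions. The alphabet clause is trivial since both automata are over $\Sigma$ and we take $\mu$ to be the identity on $\Sigma$. For transitions, I would unfold both definitions: in $\mathrm{seq}(A_t)$ the transition $t_{k+1}\in\delta(f,t_1,\ldots,t_k)$ holds iff $t_{k+1}=f(t_1,\ldots,t_k)\in\mathrm{SubTreeSet}(t)$; in the quotient, $\mu(t_{k+1})\in\delta'(f,\mu(t_1),\ldots,\mu(t_k))$ holds iff there exist indexed representatives $r'_i$ of each class such that $r'_{k+1}=f_j(r'_1,\ldots,r'_k)$ is an indexed subtree of $t^\sharp$. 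Since $f(t_1,\ldots,t_k)\in\mathrm{SubTreeSet}(t)$ iff some indexed copy $f_j(r'_1,\ldots,r'_k)$ with $\mathrm{h}(r'_i)=t_i$ appears in $t^\sharp$, the two conditions are equivalent, and the same argument run backwards handles the inverse morphism.

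For the weights, by definition of the quotient, $\nu'(\mu(r))=\sum_{r'\in\mu(r)}\nu(r')$, where $\nu(r')=1$ for every state of $A_t$. By Lemma~\ref{lem arbre ht arriv dans t} the class $\mu(r)$ is precisely the set of indexed subtrees of $t^\sharp$ whose image under $\mathrm{h}$ is $r$, so its cardinality equals the number of occurrences of $r$ as a subtree of $t$, which by definition of $\mathrm{SubTreeSeries}_t$ equals $\mathrm{SubTreeSeries}_t(r)=\nu_{\mathrm{seq}(A_t)}(r)$. Hence $\nu'(\mu(r))=\nu_{\mathrm{seq}(A_t)}(r)$, completing the morphism conditions.

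The only delicate step is the transition clause, because the quotient definition allows \emph{some} choice of representatives in each class rather than requiring a coherent choice, so one must be careful to exhibit a simultaneous family of indexed representatives $(r'_1,\ldots,r'_{k+1})$ witnessing the transition. This is where Lemma~\ref{lem arbre ht arriv dans t} (which identifies $\Delta(r)$ with the full $\sim_\mathrm{h}$-class of any lift of $r$) is doing the real work: it guarantees that whenever $f(t_1,\ldots,t_k)$ is a subtree of $t$, there is an indexed occurrence $f_j(r'_1,\ldots,r'_k)$ in $t^\sharp$ whose children are representatives of the prescribed classes, and conversely. Modulo this observation, the remaining verifications are routine unfoldings of the definitions.
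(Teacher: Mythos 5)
Your proposal is correct and follows essentially the same route as the paper: both exhibit the bijection induced by $\mathrm{h}$ between $\sim_\mathrm{h}$-classes of $\mathrm{SubTreeSet}(t^\sharp)$ and elements of $\mathrm{SubTreeSet}(t)$ (you state it as $\mu$, the paper as its inverse $g$), verify the transition clause by lifting a subtree occurrence of $t$ to an indexed occurrence in $t^\sharp$ and projecting back, and compute the weights via Lemma~\ref{lem arbre ht arriv dans t} together with the fact that all root weights of $A_t$ equal $1$, so that the class cardinality gives $\mathrm{SubTreeSeries}_t(r)$. Your explicit remark about choosing a coherent family of representatives for the quotient transition is a welcome clarification of a point the paper passes over quickly, but it does not change the argument.
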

  \begin{proof}
    Let us set $A_t=(\Sigma,Q=\mathrm{SubTreeSet}(t^\sharp),\nu,\delta)$, ${{A_t}}_{\sim_{h}}=(\Sigma,Q_\sim,\nu',\delta')$  and $\mathrm{seq}(A_t)=(\Sigma,\mathrm{SubTreeSet}(t),$ $\nu'',\delta'')$.
    
    By definition, any state $C=\{t_1,\ldots\}$ in $Q_{\sim_{h}}$ can be associated with $\mathrm{h}(t_1)$ since any two states $q$ and $q'$ in $C$ satisfies by definition  $\mathrm{h}(q)=\mathrm{h}(q')$. 
    Consequently, let us consider the function $g$ that associates to any state $C=\{t_1,\ldots,\}$ in $Q_{\sim_{h}}$ the tree $\mathrm{h}(t_1)$.
    Notice that this function is bijective since for any tree $r$, $g^{-1}(r)=\{r'\in\mathrm{\mathrm{SubTreeSet}(t^\sharp)\mid \mathrm{h}(r')=r}\}$.
    Let us show that this function defines an isomorphism between ${{A_t}}_{\sim_{h}}$ and $\mathrm{seq}(A_t)$.
    \begin{enumerate}
      \item By definition, for any state $C$ in $Q_{\sim_{h}}$, $g(f)$ belongs to $\mathrm{SubTreeSet}(t)$.
      \item For any transition $(C_{k+1},f,C_1,\ldots,C_k)$ in $\delta'$, there exist by definition $t_1,\ldots,t_{k}$ in $Q$ and a symbol $f_j$ satisfying $\mathrm{h}(f_j)=f$ such that $(f_j(t_1,\ldots,t_k),f,t_1,\ldots,t_k)$ is in $\delta$. 
        Consequently $f_j(t_1,\ldots,t_k)$ is a subtree of $t^\sharp$ and then $f(\mathrm{h}(t_1),\ldots,\mathrm{h}(t_k))$ is a subtree of $t$. 
        Therefore by definition of $A$ $(g(C_{k+1})=\mathrm{h}(t_{k+1}),f,g(C_1)=\mathrm{h}(t_1),\ldots,g(C_k)=\mathrm{h}(t_k))$ is in $\delta''$.
      \item According to Corollary~\ref{cor arbre ht arriv dans t}, for any state $r$ in $Q$, $L_r(A_t)=\{\mathrm{h}(r)\}$.
        Consequently, for any state $C$ in $Q_\sim$, $\nu'(C)=\sum_{r \in C}\nu(r)$.
        According to Lemma~\ref{lem arbre ht arriv dans t}, for any tree $r'$, $\Delta(r')=\{r\mid \mathrm{r}=r'\}=C$. 
        Moreover, according to Lemma~\ref{lem At bon serie}, for any tree $r'$, $\nu(\Delta(r'))=\mathrm{SubTreeSeries}_t(r')$.
        Then $\nu'(C)=\mathrm{SubTreeSeries}_t(g(C))=\nu''(g(C))$.
    \end{enumerate}
    \cqfd
  \end{proof}  
  
  We denote by $|t|$ the size of a tree $t$, \emph{i.e.} the number of its nodes.
  Since the subtree automaton and the relation $\sim_\mathrm{h}$ can be computed in linear time, it holds that
  
  \begin{corollary}
    Let $\Sigma$ be an alphabet.
    Let $t$ be a tree in $T_\Sigma$.
    Then:
    
    \centerline{
      The RWTA $\mathrm{seq}(A_t)$ can be computed in time and space $O(|t|)$.
    }        
  \end{corollary}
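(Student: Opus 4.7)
The plan is to exploit the isomorphism $\mathrm{seq}(A_t) \simeq {A_t}_{\sim_\mathrm{h}}$ established in the previous proposition, and argue that both $A_t$ and the quotient by $\sim_\mathrm{h}$ can be built in linear time. The statement already uses the phrasing ``since the subtree automaton and the relation $\sim_\mathrm{h}$ can be computed in linear time'', so the task reduces to justifying these two claims.

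First I would bound the size of $A_t$. By construction, its state set is $\mathrm{SubTreeSet}(t^\sharp)$, which is in bijection with the node set of $t$ (the indexing in $t^\sharp$ makes every subtree distinct), so $|Q| = |t|$. Each transition $(t_{k+1}, f, t_1, \ldots, t_k) \in \delta$ corresponds to an internal node of $t$ together with the list of its children, giving $\sum_v (1+\mathrm{arity}(v)) = O(|t|)$ space in total. A single postorder traversal of $t$ suffices to emit all states and transitions as well as the trivial root weights $\nu(q)=1$, so $A_t$ is produced in $O(|t|)$ time and space.

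Next I would compute the equivalence $\sim_\mathrm{h}$, i.e.\ group the indexed subtrees by their underlying tree. This is the classical subtree-isomorphism problem on a single tree, solvable bottom-up: process the nodes of $t^\sharp$ in postorder and assign to each node $v$ with label $f$ and children $v_1,\ldots,v_k$ (whose classes are $c_1,\ldots,c_k$) a canonical identifier determined by the tuple $(f,c_1,\ldots,c_k)$, using a hash map (or, to avoid hashing, a radix/bucket sort of these tuples over all nodes) to detect duplicates. Two nodes receive the same identifier iff their indexed subtrees have the same image under $\mathrm{h}$, hence iff they are $\sim_\mathrm{h}$-equivalent. During the same pass one also counts, for each class $C$, the number of nodes in it, which gives $\nu'(C) = \mathrm{SubTreeSeries}_t(\mathrm{h}(C))$ in constant additional time per node.

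The main obstacle is making the equivalence-class computation run in truly linear time: a naive comparison of subtrees is quadratic. This is handled by the radix-sort trick mentioned above, where the total work to sort all tuples $(f,c_1,\ldots,c_k)$ is $O\!\left(\sum_v (1+\mathrm{arity}(v))\right) = O(|t|)$ because identifiers can be assumed to lie in $\{1,\ldots,|t|\}$. Once the classes and the transitions of the quotient are emitted (each distinct subtree yields one state and at most one transition), we have produced an automaton isomorphic to $\mathrm{seq}(A_t)$ in total time and space $O(|t|)$, which is the claim. \cqfd
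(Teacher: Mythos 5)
Your proposal is correct and follows the same route the paper takes: the paper justifies this corollary only by the one-line remark that the subtree automaton and the relation $\sim_\mathrm{h}$ can be computed in linear time, and you prove exactly those two claims (linear size and construction of $A_t$ by one traversal, and linear-time computation of the $\sim_\mathrm{h}$-classes by bottom-up canonical identifiers with radix/bucket sorting of the tuples $(f,c_1,\ldots,c_k)$). The only substance the paper leaves implicit is the truly-linear-time identification of equal subtrees, and you supply it correctly, including the observation that the class sizes directly give the root weights $\nu'(C)=\mathrm{SubTreeSeries}_t(\mathrm{h}(C))$.
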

  
  Let us now show that the sequential subtree automaton is a sequential RWTA.
  Let us prove it by showing that the computation of the accessible part of the subset construction leads exactly to the computation of the quotient of the subtree automaton, where the \emph{accessible part} of the sequential RWTA associated with a RWTA is the RWTA based on the states the down languages of which is not empty. 
  
  \begin{proposition}
    Let $\Sigma$ be an alphabet.
    Let $t$ be a tree in $T_\Sigma$.
    Then:
    
    \centerline{
      The accessible part of the sequential RWTA associated with $A_t$ is equal to ${{A_t}}_{\sim_{h}}$.
    }    
  \end{proposition}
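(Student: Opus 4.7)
Let $A_t = (\Sigma, Q, \nu, \delta)$ with $Q = \mathrm{SubTreeSet}(t^\sharp)$; write $A' = (\Sigma, 2^Q, \nu', \delta')$ for the sequential RWTA associated with $A_t$ and $A'' = (\Sigma, Q_{\sim_h}, \nu'', \delta'')$ for the quotient ${A_t}_{\sim_h}$. The plan is to exhibit a state-by-state identification between the accessible part of $A'$ and $A''$, then verify that root weights and transitions match under this identification.

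First I would characterise the reachable states of $A'$. By Lemma~\ref{lem delta' eq delta}, $\Delta'(r) = \{\Delta(r)\}$ for every $r \in T_\Sigma$, so the set of states met by $A'$ is $\{\Delta(r) \mid r \in T_\Sigma\}$. A short structural induction using the shape of $\delta$ in $A_t$ gives $\Delta(r) = \emptyset$ whenever $r \notin \mathrm{SubTreeSet}(t)$, while Lemma~\ref{lem arbre ht arriv dans t} gives $\Delta(r) = \{r' \in Q \mid \mathrm{h}(r') = r\}$ for $r \in \mathrm{SubTreeSet}(t)$. The non-empty reachable states of $A'$ are therefore exactly the sets $\{r' \in Q \mid \mathrm{h}(r') = r\}$ for $r$ ranging over $\mathrm{SubTreeSet}(t)$, and these are precisely the equivalence classes of $\sim_h$ over $Q$, i.e. the states of $A''$. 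Hence the two state sets coincide as families of subsets of $Q$, and the identity map on $2^Q$ is the candidate isomorphism.

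Root weights agree on the nose, since both $\nu'$ and $\nu''$ are defined by $S \mapsto \sum_{q \in S}\nu(q)$. For the transitions, fix accessible states $\Delta(r_1), \ldots, \Delta(r_k)$. In $A'$ the unique successor is $\delta'(f, \Delta(r_1), \ldots, \Delta(r_k)) = \{\delta(f, \Delta(r_1), \ldots, \Delta(r_k))\} = \{\Delta(f(r_1, \ldots, r_k))\}$. In $A''$, a class $C$ belongs to $\delta''(f, \Delta(r_1), \ldots, \Delta(r_k))$ iff there are $q_i \in \Delta(r_i)$ and $q \in C$ with $q \in \delta(f, q_1, \ldots, q_k)$; by the definition of $A_t$ such a $q$ must have the form $f_j(q_1, \ldots, q_k)$ with $\mathrm{h}(f_j) = f$, forcing $\mathrm{h}(q) = f(r_1, \ldots, r_k)$ and hence $C = \Delta(f(r_1, \ldots, r_k))$. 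Both transition relations therefore yield the same (possibly absent) successor.

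The most delicate point, and what I expect to need the greatest care, is the empty state: $A'$ typically contains $\emptyset$ as a reachable sink, since $T_\Sigma \setminus \mathrm{SubTreeSet}(t)$ is generally non-empty, whereas $A''$ has no empty class. One must argue that restricting to the \emph{accessible part} legitimately discards it, which reduces to the observation that $\nu'(\emptyset) = 0$ and that the only transitions from $\emptyset$ loop back to $\emptyset$ (since $\delta(f, \emptyset, \ldots, \emptyset) = \emptyset$), so $\emptyset$ contributes nothing to the realised series and detaches cleanly from the rest of the automaton. Once this is cleared, the remainder of the argument is a routine unpacking of the definitions of $A'$ and $A''$ together with Lemmas~\ref{lem delta' eq delta} and~\ref{lem arbre ht arriv dans t}.
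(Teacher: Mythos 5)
Your proof is correct and follows essentially the same route as the paper's: both invoke Lemma~\ref{lem delta' eq delta} and Lemma~\ref{lem arbre ht arriv dans t} to identify the non-empty reachable states of the subset construction with the $\sim_\mathrm{h}$-classes of $\mathrm{SubTreeSet}(t^\sharp)$, and then observe that root weights and transitions coincide. You are in fact more careful than the paper on the one delicate point, the sink state $\emptyset$, which the paper passes over in silence when it concludes $Q''=Q'$; your remark that $\emptyset$ carries weight $0$ and only loops to itself is precisely the justification needed there.
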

  \begin{proof}
    Let us set $A_t=(\Sigma,Q=\mathrm{SubTreeSet}(t^\sharp),\nu,\delta)$, $A'=(\Sigma,Q',\nu',\delta')$ the accessible part of sequential RWTA associated with $A_t$ and $A''={{A_t}}_{\sim_{h}}=(\Sigma,Q'',\nu'',\delta'')$.
        
    According to Lemma~\ref{lem delta' eq delta}, $\Delta'(r')=\{\Delta(r')\}$.    
    According to Lemma~\ref{lem arbre ht arriv dans t}, $\Delta(r')=\{r\in \mathrm{SubTreeSet}(t^\sharp)\mid \mathrm{h}(r)=r'\}$.
    Hence, $\Delta'(r')=\{\{r\in \mathrm{SubTreeSet}(t^\sharp)\mid \mathrm{h}(r)=r'\}\}\}$ that is an equivalence class of $\sim_\mathrm{h}$.
    Therefore, $Q''=Q'$ and $\delta'=\delta''$.    
    Moreover, for any state $C$ in $Q'$, $\nu'(C)$ and $\nu''(C)$ are both equal by definition to $\sum_{c\in C}\nu(c)$.
    Consequently $A'=A''$.    
    \cqfd
  \end{proof}
  
  \begin{corollary}
    Let $\Sigma$ be an alphabet.
    Let $t$ be a tree in $T_\Sigma$.
    Then:
    
    \centerline{
      The accessible part of the sequential RWTA associated with $A_t$ is isomorphic to $\mathrm{seq}(A_t)$.
    }    
  \end{corollary}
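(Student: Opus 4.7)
The plan is to derive this corollary by composing the two immediately preceding results: the proposition stating that the accessible part of the sequential RWTA associated with $A_t$ is (literally) equal to ${A_t}_{\sim_\mathrm{h}}$, and the earlier proposition stating that $\mathrm{seq}(A_t)$ is isomorphic to ${A_t}_{\sim_\mathrm{h}}$. Since equality is a particular case of isomorphism (via the identity morphism), and since isomorphism of RWTAs is transitive, chaining the two yields the desired conclusion.

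More concretely, I would first invoke the preceding proposition to obtain $A' = {A_t}_{\sim_\mathrm{h}}$, where $A'$ denotes the accessible part of the sequential RWTA associated with $A_t$. Then I would invoke the earlier isomorphism result, which provides a morphism $g$ from ${A_t}_{\sim_\mathrm{h}}$ to $\mathrm{seq}(A_t)$ admitting an inverse morphism $g^{-1}$. Composing $g$ with the identity morphism on $A' = {A_t}_{\sim_\mathrm{h}}$ yields an isomorphism from $A'$ to $\mathrm{seq}(A_t)$, whose inverse is obtained symmetrically. The four conditions defining a morphism (action on states, on alphabet symbols, on transitions, and on the root weight function) are preserved under composition, so the verification is purely formal.

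There is essentially no obstacle: the only substantive content has already been established in the two prior results, and the corollary is a straightforward transitivity statement. If one wanted to be fully explicit, the single routine remark to make is that isomorphism of RWTAs is an equivalence relation — reflexivity via the identity, symmetry built into the definition, and transitivity by composing morphisms and their inverses — which is standard and follows directly from the definition of morphism of RWTA given earlier in the paper.
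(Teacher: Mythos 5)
Your proposal is correct and matches the paper's intent exactly: the corollary is stated there as an immediate consequence of the preceding proposition (equality of the accessible part with ${A_t}_{\sim_\mathrm{h}}$) combined with the earlier proposition that $\mathrm{seq}(A_t)$ is isomorphic to ${A_t}_{\sim_\mathrm{h}}$, which is precisely the chaining you describe. Your added remark that isomorphism of RWTAs is transitive is the only detail the paper leaves implicit, and it is routine.
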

  
  To sum up the properties of the sequential subtree RWTA associated with a tree:
  
  \begin{corollary}
    Let $\Sigma$ be an alphabet.
    Let $t$ be a tree in $T_\Sigma$.
    Then the RWTA $\mathrm{seq}(A_t)$:    
    \begin{itemize}
      \item is a sequential RWTA,
      \item is smaller than $A_t$,
      \item realizes $\mathrm{SubTreeSeries}_t$,
      \item is constructed in time and space $O(|t|)$.
    \end{itemize}
  \end{corollary}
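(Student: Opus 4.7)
The plan is to verify each of the four bullet points by chaining together results already proved earlier in the section; the corollary is essentially a summary, and nothing genuinely new needs to be shown.

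First, for sequentiality: the preceding corollary states that $\mathrm{seq}(A_t)$ is isomorphic to the accessible part of the sequential RWTA associated with $A_t$. By Proposition~\ref{prop seq pres ser}, the sequential RWTA associated with $A_t$ is sequential, meaning that $|\Delta(r)|\leq 1$ for every tree $r$. Restricting to the accessible part preserves this property, and an isomorphism between RWTAs translates runs bijectively (via the identity $\Delta_2(\mu(t))=\bigcup_{q\in\Delta_1(t)}\{\mu(q)\}$ mentioned before the morphism lemma), hence preserves sequentiality. Thus $\mathrm{seq}(A_t)$ is sequential.

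Second, the size bound: the states of $\mathrm{seq}(A_t)$ are indexed by $\mathrm{SubTreeSet}(t)$, while those of $A_t$ are indexed by $\mathrm{SubTreeSet}(t^\sharp)$. The function $\mathrm{h}$ restricts to a surjection from $\mathrm{SubTreeSet}(t^\sharp)$ onto $\mathrm{SubTreeSet}(t)$, so $|\mathrm{SubTreeSet}(t)|\leq|\mathrm{SubTreeSet}(t^\sharp)|$, which proves the inequality on the number of states (and, by inspection of the transition sets, on the full size of the automaton).

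Third, for realization of $\mathrm{SubTreeSeries}_t$: Lemma~\ref{lem At bon serie} gives $\mathbb{P}_{A_t}=\mathrm{SubTreeSeries}_t$. The lemma stating that $\sim_\mathrm{h}$ is down compatible with $A_t$, together with the proposition that quotienting by a down-compatible relation preserves the realized series, yields $\mathbb{P}_{{A_t}_{\sim_\mathrm{h}}}=\mathbb{P}_{A_t}$. Finally, by the proposition asserting that $\mathrm{seq}(A_t)$ is isomorphic to ${A_t}_{\sim_\mathrm{h}}$ and the corollary that isomorphic RWTAs realize the same series, we conclude $\mathbb{P}_{\mathrm{seq}(A_t)}=\mathrm{SubTreeSeries}_t$.

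Fourth, the complexity bound is simply a restatement of the corollary established just after the sequential subtree automaton was shown isomorphic to ${A_t}_{\sim_\mathrm{h}}$; no further work is required. Since each of the four claims follows directly from a result cited above, there is no real obstacle here: the only mildly delicate point is keeping straight which automaton realizes which series through the chain $A_t\to {A_t}_{\sim_\mathrm{h}}\to \mathrm{seq}(A_t)$, but the bookkeeping is routine given the lemmas already in place.
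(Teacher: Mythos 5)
Your proof is correct and matches the paper's (implicit) argument: the corollary is stated there as a pure summary, and the intended justification is exactly the chain of preceding results you assemble — sequentiality via the isomorphism with the accessible part of the subset construction, the size comparison via the surjection $\mathrm{h}$ from $\mathrm{SubTreeSet}(t^\sharp)$ onto $\mathrm{SubTreeSet}(t)$, the realized series via down-compatibility of $\sim_\mathrm{h}$ and preservation under quotient and isomorphism, and the complexity bound from the earlier corollary. No gaps.
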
  
  
  Let us now show how to extend this construction to finite tree languages.
  
  \subsection{Subtree Automaton of a Finite Tree Language}  
  
  Let us first define a RWTA recognizing the subtree series of a finite tree languages.
  
  \begin{definition}
    Let $\Sigma$ be an alphabet.
    Let $L$ be a finite tree language over $\Sigma$.
    The \emph{sequential subtree automaton} associated with $L$ is the RWTA $A_L=(\Sigma,Q,\nu,\delta)$ defined by:
    \begin{itemize}
      \item $Q=\mathrm{SubTreeSet}(L)$,
      \item $\forall t'\in Q$, $\nu(t')=\mathrm{SubTreeSeries}_L(t')$,
      \item $\forall f\in \Sigma$, $\forall t_1,\ldots,t_{k+1}\in Q$, $t_{k+1}\in\delta(f,t_1,\ldots,t_k)$ $\Leftrightarrow$ $t_{k+1}=f(t_1,\ldots,t_k)$.
    \end{itemize}
  \end{definition} 
  
  Notice that by definition, for any tree $t$, $\mathrm{seq}(A_t)$ and $A_{\{t\}}$ are isomorphic.
  
  \begin{example}
    Let us consider the trees $t_1=f(h(a),f(h(a),b))$ and $t_2=f(h(a),h(b))$ defined in Example~\ref{ex subset tree}.
    The RWTA $A_{\{t_1,t_2\}}$ is represented in Figure~\ref{fig ex Al} and realized the series $\mathrm{SubTreeSeries}_{\{t_1,t_2\}}=t+t_2+f(h(a),b)+3h(a)+h(b)+3a+2b$.
  \end{example} 
  
  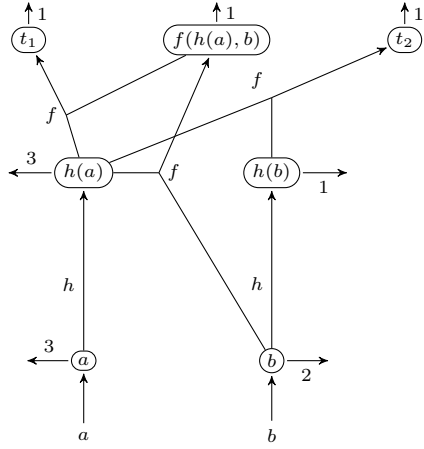
\begin{figure}[H]
    \centerline{
	    \begin{tikzpicture}[node distance=2.5cm,bend angle=30,transform shape,scale=1]
	      \node[state,rounded rectangle] (a) {$a$};
	      \node[state, above of=a,rounded rectangle] (h) {$h(a)$};
	      \node[state, above of=b,rounded rectangle] (hb) {$h(b)$};		
	      \node[state, right of=a,rounded rectangle] (b) {$b$};	   
	      \node[state, above right of=h,rounded rectangle] (f) {$f(h(a),b)$};	
	      \node[state, left of=f,rounded rectangle] (t1) {$t_1$};	 
	      \node[state, right of=f,rounded rectangle] (t2) {$t_2$};	 
        \draw (a) ++(-0.75cm,0cm) edge[<-] node {$3$} (a); 
        \draw (h) ++(-1cm,0cm) edge[<-] node {$3$} (h); 
        \draw (hb) ++(1cm,0cm) edge[<-] node {$1$} (hb); 
        \draw (b) ++(0.75cm,0cm) edge[<-] node {$2$} (b); 
        \draw (t1) ++(0cm,0.5cm) edge[<-] node {$1$} (t1);
        \draw (t2) ++(0cm,0.5cm) edge[<-] node {$1$} (t2);
        \draw (f) ++(0cm,0.5cm) edge[<-] node {$1$} (f);   
        \draw (a) ++(0cm,-1cm) node {$a$}  edge[->] (a);  
        \draw (b) ++(0cm,-1cm) node {$b$}  edge[->] (b);  
        \path[->]
          (a) edge[->,below left] node {$h$} (h)
          (b) edge[->,below left] node {$h$} (hb)
        ;
        \draw (h) ++(1cm,0cm)  edge[->] node[right,pos=0] {$f$} (f) edge[shorten >=0pt] (h) edge[shorten >=0pt] (b);
        \draw (t1) ++(0.5cm,-1cm)  edge[->] node[left,pos=0] {$f$} (t1) edge[shorten >=0pt] (h) edge[shorten >=0pt] (f);
        \draw (hb) ++(0cm,1cm)  edge[->] node[above left,pos=0] {$f$} (t2) edge[shorten >=0pt] (h) edge[shorten >=0pt] (hb);
      \end{tikzpicture}
    }
    \caption{The RWTA $A_{\{t_1,t_2\}}$.}
    \label{fig ex Al}
  \end{figure}
  
  Similarly to the case of the sequential subtree RWTA of a tree, the subtree RWTA $A$ of a language needs the subtree series to be \emph{a priori} known.
  Let us show that $A$ can be computed without knowing the series.
  In order to compute it, we make use of the sum and of the sequentialization, two operations defined in Section~\ref{sec tree ser rwta}. 
  
  \begin{lemma}\label{lem delta r eq r}
    Let $\Sigma$ be an alphabet.
    Let $L$ be a finite tree language over $\Sigma$.
    Let $A_L=(\Sigma,Q,\nu,\delta)$.
    For any tree $r$ in $\mathrm{SubTreeSet}(L)$, it holds:
    
    \centerline{
      $\Delta(r)=\{r\}$.
    }
  \end{lemma}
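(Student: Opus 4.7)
The plan is to proceed by structural induction on the tree $r = f(r_1, \ldots, r_k)$ in $\mathrm{SubTreeSet}(L)$. The crucial preliminary observation I would record at the outset is that $\mathrm{SubTreeSet}$ is closed under taking subtrees: whenever $r = f(r_1, \ldots, r_k)$ belongs to $\mathrm{SubTreeSet}(L)$, each $r_i$ is also a subtree of $L$ and hence each $r_i$ lies in $Q$. This justifies applying the induction hypothesis to the $r_i$.

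For the base case $k = 0$, i.e.\ $r = f \in \Sigma_0$, by the definition of $\Delta$ in a RWTA we have $\Delta(f) = \delta(f)$, and by the definition of $\delta$ in $A_L$ the only state $t_1 \in Q$ satisfying $t_1 = f$ is $f$ itself (which is in $Q$ because $r \in \mathrm{SubTreeSet}(L)$). So $\Delta(f) = \{f\} = \{r\}$.

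For the inductive step, each $r_i$ lies in $\mathrm{SubTreeSet}(L)$, so by induction $\Delta(r_i) = \{r_i\}$. Then
\centerline{$\Delta(r) = \delta(f, \Delta(r_1), \ldots, \Delta(r_k)) = \delta(f, \{r_1\}, \ldots, \{r_k\}) = \delta(f, r_1, \ldots, r_k),$}
where the last equality uses the extension of $\delta$ to subsets of $Q$. Finally, the defining equivalence $t_{k+1} \in \delta(f, r_1, \ldots, r_k) \Leftrightarrow t_{k+1} = f(r_1, \ldots, r_k)$, together with the fact that $f(r_1, \ldots, r_k) = r \in Q$, yields $\delta(f, r_1, \ldots, r_k) = \{r\}$.

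There is no real obstacle here; the only point that requires a moment of care is the closure of $\mathrm{SubTreeSet}(L)$ under taking immediate subtrees, which follows immediately from the recursive definition of $\mathrm{SubTree}(t)$ and the union defining $\mathrm{SubTreeSet}(L)$. Everything else is an unpacking of the definitions of $\delta$, its extension to subsets, and $\Delta$.
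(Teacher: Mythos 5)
Your proof is correct and follows essentially the same structural induction as the paper's own argument: unfold $\Delta(r)=\delta(f,\Delta(r_1),\ldots,\Delta(r_k))$, apply the induction hypothesis to get $\delta(f,r_1,\ldots,r_k)$, and conclude from the defining equivalence for $\delta$ in $A_L$. You are merely more explicit than the paper about the base case and about $\mathrm{SubTreeSet}(L)$ being closed under immediate subtrees, which is a harmless (and welcome) elaboration.
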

  \begin{proof}
    By induction over the structure of $r=f(r_1,\ldots,r_k)$.
    By definition of $\Delta$, $\Delta(r)=\delta(f,\Delta(r_1),\ldots,\Delta(r_k))$.
    By induction hypothesis, $\Delta(r_i)=\{r_i\}$.
    Thus, $\Delta(r)=\delta(f,r_1,\ldots,r_k)$.
    Hence $\Delta(r)=\{r\}$.
    \cqfd
  \end{proof}
  
  As a direct consequence of Lemma~\ref{lem delta r eq r}, $A_L$ is sequential.
  Let us show now that this RWTA can be obtained by an inductive sequentialization.
  
  \begin{proposition}\label{prop al iso sum seq}
    Let $\Sigma$ be an alphabet.
    Let $L_1$ and $L_2$ be two distinct finite tree languages over $\Sigma$.
    Then:
    
    \centerline{
      The RWTA $A_{L_1\cup L_2}$ is isomorphic to the accessible part of the sequential RWTA associated with $A_{L_1}+A_{L_2}$.
    }    
  \end{proposition}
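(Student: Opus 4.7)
The plan is to define an explicit bijection $\mu$ from the state set of $A_{L_1\cup L_2}$ onto the state set of the accessible part $A'$ of the sequentialization of $A_{L_1}+A_{L_2}$, and then verify that $\mu$ preserves the transition relation and the root-weight function. Writing $A_{L_i}=(\Sigma,Q_i,\nu_i,\delta_i)$ with $Q_i=\mathrm{SubTreeSet}(L_i)$, and renaming so that $Q_1\cap Q_2=\emptyset$, I would first apply Lemma~\ref{lem delta r eq r} to each $A_{L_i}$ to get $\Delta_i(r)=\{r\}$ when $r\in Q_i$ and $\Delta_i(r)=\emptyset$ otherwise, then reuse the identity $\Delta_{\mathrm{sum}}(r)=\Delta_1(r)\cup\Delta_2(r)$ extracted from the proof of Proposition~\ref{prop sum aut is sum ser}, and finally apply Lemma~\ref{lem delta' eq delta} to obtain $\Delta'(r)=\{\Delta_{\mathrm{sum}}(r)\}$ in the sequentialization.

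Setting $\mu(r)=\Delta_{\mathrm{sum}}(r)$ for each $r\in\mathrm{SubTreeSet}(L_1\cup L_2)$ then gives the candidate bijection, since the accessible non-empty states of the sequentialization are exactly those of the form $\Delta_{\mathrm{sum}}(r)$ with $r$ ranging over the common subtree set, and each such set retains, through the disjoint copies of its elements, the underlying tree $r$ as its label (so $\mu$ is automatically injective). Preservation of transitions is then immediate: $f(r_1,\ldots,r_k)\in\mathrm{SubTreeSet}(L_1\cup L_2)$ iff $f(r_1,\ldots,r_k)\in\mathrm{SubTreeSet}(L_j)$ for some $j$, which is precisely the condition for $\delta'_{\mathrm{sum}}(f,\mu(r_1),\ldots,\mu(r_k))=\{\mu(f(r_1,\ldots,r_k))\}$ to hold by the definition of the sequentialization combined with the absence of mixed transitions in $\delta_1\cup\delta_2$.

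The only genuinely non-cosmetic step, and the place I expect the main obstacle, is the root-weight equality $\mathrm{SubTreeSeries}_{L_1\cup L_2}(r)=\mathrm{SubTreeSeries}_{L_1}(r)+\mathrm{SubTreeSeries}_{L_2}(r)$. This forces me to read \emph{distinct} in the hypothesis as \emph{disjoint}, since any $t\in L_1\cap L_2$ would be counted once on the left but twice on the right through the double indexing of the sum automaton. Under the disjointness reading, the equality is immediate from the definition $\mathrm{SubTreeSeries}_L=\sum_{t'\in L}\mathrm{SubTreeSeries}_{t'}$ by splitting the index along the partition $L_1\cup L_2=L_1\sqcup L_2$, and the expression $\nu'(\mu(r))=\sum_{q\in\Delta_{\mathrm{sum}}(r)}\nu_{\mathrm{sum}}(q)$ collapses to exactly this sum, finishing the morphism clause and hence the isomorphism.
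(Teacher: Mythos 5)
Your proof is correct, but it takes a genuinely different route from the paper's. The paper argues by recurrence on $\mathrm{Card}(L_2)$, peeling one tree $t$ off $L_2$ at a time, summing $A_{\{t\}}$ onto the automaton obtained so far and re-sequentializing; you instead build the isomorphism in one shot, combining Lemma~\ref{lem delta r eq r} for each factor, the union identity $\Delta_{\mathrm{sum}}(r)=\Delta_1(r)\cup\Delta_2(r)$ from the proof of Proposition~\ref{prop sum aut is sum ser}, and Lemma~\ref{lem delta' eq delta}, and then reading off the bijection $r\mapsto\Delta_{\mathrm{sum}}(r)$. Your direct argument is the cleaner of the two: it sidesteps the delicate point in the paper's induction that the recurrence hypothesis delivers $A_{L_1\cup L_2'}$ only up to isomorphism rather than as the literal summand of the next step, and it makes the provenance of each root weight explicit. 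What the paper's inductive formulation buys is a proof that mirrors the incremental algorithm actually intended (sum one $A_{\{t\}}$ at a time and keep only the accessible part), which is what supports the linear-time claim of Corollary~\ref{cor tps cons al}. Your observation that \emph{distinct} must be strengthened to \emph{disjoint} is well taken and should be kept: for $t\in L_1\cap L_2$ the series $\mathrm{SubTreeSeries}_{L_1\cup L_2}$ counts $\mathrm{SubTreeSeries}_t$ once while the sum automaton contributes it twice, so the statement is false under the literal reading; the paper's own recurrence makes the same silent assumption when it adds $t\notin L_2'$ without excluding $t\in L_1$, and disjointness does hold in the downstream application where singletons of pairwise distinct trees are summed.
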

  \begin{proof}
    By recurrence over the cardinality of $L_2$.    
    \begin{enumerate}
      \item If $L_2$ is empty, then the proposition is satisfied.
      \item Suppose that $L_2=L'_2\cup\{t\}$ with $t\notin L'_2$.
        Then according to the recurrence hypothesis, the RWTA $A'=A_{L_1\cup L'_2}=(\Sigma,Q',\nu',\delta')$ is isomorphic to the sequential RWTA associated with $A_{L_1}+A_{L'_2}$.
        Let $A_{\{t\}}=(\Sigma,Q_t,\nu_t,\delta_t)$, $A''=A'+A_{\{t\}}=(\Sigma,Q'',\nu'',\delta'')$ and $A'''=(\Sigma,Q''',\nu''',\delta''')$ be the sequential RWTA associated with $A''$.
        By construction, either $Q'\cap \mathrm{SubTreeSet}(t)$ is empty, or the states $r$ of $A_{\{t\}}$ have to be relabelled as $\overline{r}$.
        \begin{enumerate}
          \item By construction, if $Q'\cap \mathrm{SubTreeSet}(t)$ is empty, it holds from Lemma~\ref{lem arbre ht arriv dans t} and from Lemma~\ref{lem delta r eq r} that the construction of the accessible part of the sequential RWTA associated with $A''$ is just a relabelling of the states.
            Furthermore, it can be shown by definition of $A''$ that $A''=A_{L_1\cup L_2}$.
            Hence $A_{L_1\cup L_2}$ is isomorphic to the acecessible part of the sequential RWTA associated with $A_{L_1}+A_{L_2}$.
          \item Otherwise, according to Lemma~\ref{lem delta' eq delta}, $\Delta'''(r)=\{\Delta''(r)\}$, that equals by construction to the set $\{\Delta_t(r)\cup\Delta'(r)\}$.
            Hence the states of the accessible part of $A'''$ are $\mathrm{SubtreeSet}(L_1\cup L'_2)\cup \mathrm{SubtreeSet}(t)$ that equals by definition to $\mathrm{SubtreeSet}(L_1\cup L_2)$.
            Furthermore, for any state $r$ in $Q'''$, 
            
            \centerline{
              $\nu'''(\Delta'''(r))=\nu''(\Delta''(r))=
                \left\{
                  \begin{array}{l@{\ }l}
                    \nu''(\{r,\overline{r}\})=\nu_t(r)+\nu'(r) & \text{ if } r\in Q'\cap \mathrm{SubTreeSet}(t),\\
                    \nu''(\{r\})=\nu'(r) & \text{ if }  r\in Q'\setminus \mathrm{SubTreeSet}(t),\\
                    \nu''(\{\overline{r}\})=\nu_t(\overline{r}) & \text{ if }  r\in \mathrm{SubTreeSet}(t) \setminus Q'.\\
                  \end{array}
                \right.$
            }
            
            Consequently, since $A'''$ is sequential, for any state $r$ in $Q'''$, 
            
            \centerline{
              $\nu'''(r)=\nu'''(\Delta'''(r))=\mathrm{SubTreeSeries}_{L_1\cup L'_2}(r)+\mathrm{SubTreeSeries}_{t}(r)=\mathrm{SubTreeSeries}_{L_1\cup L_2}(r)$.
            }
            
            Finally, since by construction of $A'''$, $\forall f\in \Sigma$, $\forall \{t_1\},\ldots,\{t_{k+1}\}\in Q'''$, $\{t_{k+1}\}\in\delta(f,\{t_1\},\ldots,\{t_k\})$ $\Leftrightarrow$ $t_{k+1}=f(t_1,\ldots,t_k)$, it holds that $A'''$ is isomorphic to $A_{L_1\cup L_2}$.
        \end{enumerate}
    \end{enumerate}   
    \cqfd
  \end{proof} 
  
  Notice that since the complexity of the computation of the accessible part of $A_{L_1}+A_{L_2}$ is equal to the size of $A_{L_1\cup L_2}$, by setting for any tree language $L$, $|L|=\sum_{t\in L}|t|$, it can be performed in time equal to $|L_1|+|L_2|$, and not in an exponential time.  
  Moreover, as a direct consequence of Proposition~\ref{prop seq pres ser}, Proposition~\ref{prop sum aut is sum ser}, Proposition~\ref{prop series real somme lang down} and Proposition~\ref{prop al iso sum seq}
  
  \begin{corollary}\label{cor lang subtree aut ok}
    Let $\Sigma$ be an alphabet.
    Let $L$ be a finite tree language over $\Sigma$.
    Then:
    
    \centerline{
      The RWTA $A_L$ is a sequential RWTA that realizes $\mathrm{SubTreeSeries}_L$.
    }    
  \end{corollary}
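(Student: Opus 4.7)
The plan is to split the corollary into its two assertions and derive each from the machinery already assembled. The sequentiality assertion is essentially immediate: Lemma~\ref{lem delta r eq r} shows that for every subtree $r$ of $L$, one has $\Delta(r)=\{r\}$, and for any tree $t$ outside $\mathrm{SubTreeSet}(L)$ a routine induction gives $\Delta(t)=\emptyset$. In either case $\mathrm{Card}(\Delta(t))\leq 1$, so $A_L$ is sequential by definition.

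For the series assertion, I would proceed by induction on $\mathrm{Card}(L)$. The base case $L=\{t\}$ follows because $A_{\{t\}}$ is, by the remark just after its definition, isomorphic to $\mathrm{seq}(A_t)$, which by Proposition~\ref{prop seq pres ser} together with Lemma~\ref{lem At bon serie} realizes $\mathrm{SubTreeSeries}_t=\mathrm{SubTreeSeries}_{\{t\}}$. The inductive step writes $L=L_1\cup\{t\}$ with $t\notin L_1$ and chains four results. First, Proposition~\ref{prop al iso sum seq} says $A_L$ is isomorphic to the accessible part of the sequential RWTA associated with $A_{L_1}+A_{\{t\}}$. Second, Proposition~\ref{prop seq pres ser} says sequentialization preserves the realized series, and restricting to the accessible part does not change it either (unreachable states contribute nothing since their down language is empty, which one reads off Proposition~\ref{prop series real somme lang down}). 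Third, Proposition~\ref{prop sum aut is sum ser} gives $\mathbb{P}_{A_{L_1}+A_{\{t\}}}=\mathbb{P}_{A_{L_1}}+\mathbb{P}_{A_{\{t\}}}$. Finally, the induction hypothesis and the base case identify these two summands as $\mathrm{SubTreeSeries}_{L_1}$ and $\mathrm{SubTreeSeries}_{\{t\}}$, whose sum is $\mathrm{SubTreeSeries}_L$ by the very definition of $\mathrm{SubTreeSeries}$ on a finite language.

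The only subtle point, and the one I expect to double-check carefully, is the claim that passing from the sequential RWTA to its accessible part does not alter the realized series; this is not stated as a proposition in the excerpt. The justification is short though: by Proposition~\ref{prop series real somme lang down}, the realized series decomposes as $\sum_{q}\nu(q)L_q$, and any state dropped by the accessible-part construction has $L_q=\emptyset$ and therefore contributes zero. Combining this observation with the four cited propositions yields the corollary in one line.
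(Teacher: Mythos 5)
Your proof is correct and follows essentially the same route as the paper, which derives the corollary directly from the same four results you chain together (Propositions~\ref{prop seq pres ser}, \ref{prop sum aut is sum ser}, \ref{prop series real somme lang down} and \ref{prop al iso sum seq}), with sequentiality read off Lemma~\ref{lem delta r eq r} exactly as you do. Your explicit induction on $\mathrm{Card}(L)$ and your justification that restricting to the accessible part preserves the realized series (via the down-language decomposition of Proposition~\ref{prop series real somme lang down}) merely make precise steps the paper leaves implicit.
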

  
  Consequently the subtree automaton $A_L$ associated with a finite tree language $L$ can be computed by summing and sequentializing all the $A_{\{t\}} \equiv A_t$ for $t$ in $L$.
  Therefore, since the coomputation of the sequential subtree RWTA of any tree, the sum and the sequentialization (in this case) can be computed in linear time, it holds that, :
  
  \begin{corollary}\label{cor tps cons al}
    Let $\Sigma$ be an alphabet.
    Let $L$ be a finite tree language over $\Sigma$.
    Then:
    
    \centerline{
      The RWTA $A_L$ can be computed in time $O(|L|)$.
    }        
  \end{corollary}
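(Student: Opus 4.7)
The plan is to prove Corollary~\ref{cor tps cons al} by induction on the cardinality of $L$, using the inductive characterization given in Proposition~\ref{prop al iso sum seq}. Enumerate $L$ as $\{t_1,\ldots,t_n\}$, and for each $i$ build the RWTA $A_{\{t_1,\ldots,t_i\}}$ from $A_{\{t_1,\ldots,t_{i-1}\}}$ by adding one tree at a time.

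For the base case, I would use the isomorphism $A_{\{t_1\}}\equiv \mathrm{seq}(A_{t_1})$ noted just after the definition of the sequential subtree automaton of a language, combined with the corollary that states $\mathrm{seq}(A_t)$ is computable in time $O(|t|)$. For the inductive step, Proposition~\ref{prop al iso sum seq} gives that $A_{\{t_1,\ldots,t_i\}}$ is isomorphic to the accessible part of the sequential RWTA associated with $A_{\{t_1,\ldots,t_{i-1}\}}+A_{t_i}$. The sum itself is obtained in time $O(|t_i|)$ by disjoint union of the state sets and transition sets (it is just a juxtaposition and a straightforward relabelling of $A_{t_i}$), and the subsequent accessible part of the sequentialization, as observed in the paragraph preceding the statement, has size equal to $|A_{\{t_1,\ldots,t_i\}}|$, which is at most $|t_1|+\cdots+|t_i|$. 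Summing, this yields a total cost of order $\sum_{i=1}^n |t_i| = |L|$.

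The delicate step, and the one I would spend most care on, is justifying that the accessible-part-of-sequentialization can actually be computed in time $O(|t_i|)$ at each step, and not merely bounded by the size of the result. The key observation is that, thanks to Lemma~\ref{lem delta r eq r}, the states of the partial automaton are already in bijection with the (distinct) subtrees of $\{t_1,\ldots,t_{i-1}\}$; when we add $A_{t_i}$, we only need to traverse $t_i$ once, bottom-up, and for each of its subtrees $r=f(r_1,\ldots,r_k)$ check whether the state $\delta(f,\Delta(r_1),\ldots,\Delta(r_k))$ already exists in the current sequential automaton (either reusing it, or creating a single fresh state). Using a hashing on tuples of state-identifiers, each such lookup is amortized constant time, so the entire merge is $O(|t_i|)$.

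Combining the base case, the inductive step, and this amortized analysis gives $\sum_{i=1}^n O(|t_i|)=O(|L|)$, and Corollary~\ref{cor lang subtree aut ok} guarantees that the automaton produced is indeed $A_L$ (up to isomorphism). The main obstacle, as indicated above, is really the amortized lookup argument, since Proposition~\ref{prop al iso sum seq} only tells us \emph{what} is computed, not that each incremental step is proportional to the newly added tree rather than to the current accumulated size.
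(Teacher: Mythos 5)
Your proposal follows essentially the same route as the paper: build $\mathrm{seq}(A_t)\equiv A_{\{t\}}$ for each $t\in L$ in time $O(|t|)$, then sum and sequentialize, invoking Proposition~\ref{prop al iso sum seq} together with the observation that the accessible part of the sequentialization has size bounded by $|L|$. You are in fact somewhat more careful than the paper, which merely asserts that the sum and the sequentialization ``can be computed in linear time'': your amortized, hash-based argument that each incremental merge costs $O(|t_i|)$ rather than $O(|t_1|+\cdots+|t_i|)$ makes explicit a point the paper leaves implicit.
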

  
  \subsection{Kernel Computation}
  
  In order to compute the subset kernel of two finite tree languages $L_1$ and $L_2$, we first compute the two RWTAs $A_{L_1}$ and $A_{L_2}$;
  Then we compute the cartesian product $A_{L_1}\times A_{L_2}$;
  Finally we sum all the root weight of this RWTA.
  
  Let us first show that our \emph{modus operandi} is correct:
  
  \begin{theorem}\label{thm calc ker subtree}
    Let $\Sigma$ be an alphabet.
    Let $L_1$ and $L_2$ be two finite tree languages over $\Sigma$.
    Let $(\Sigma,Q,\nu,\delta)$ be the accessible part of $A_{L_1}\times A_{L_2}$.
    Then 
    
    \centerline{
      $\mathrm{KerSeries}(L_1,L_2)=\nu(Q)$.
    }
  \end{theorem}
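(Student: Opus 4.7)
The plan is to reduce the statement to a routine computation about the product automaton by exploiting two facts already established: $A_{L_i}$ realizes $\mathrm{SubTreeSeries}_{L_i}$ (Corollary~\ref{cor lang subtree aut ok}), and the Cartesian product of RWTAs realizes the pointwise product of the realized series (the proposition on $A_1 \times A_2$). Putting these together, $A' = A_{L_1} \times A_{L_2}$ realizes $\mathrm{SubTreeSeries}_{L_1} \times \mathrm{SubTreeSeries}_{L_2}$, so $\mathrm{KerSeries}(L_1,L_2) = \sum_{t \in T_\Sigma} \mathbb{P}_{A'}(t)$.

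The key structural step is identifying the accessible part of $A'$. Since each $A_{L_i}$ is the sequential subtree automaton of a language, Lemma~\ref{lem delta r eq r} gives $\Delta_i(r) = \{r\}$ whenever $r \in \mathrm{SubTreeSet}(L_i)$ and, by a straightforward unfolding, $\Delta_i(r) = \emptyset$ otherwise. Combining this with Lemma~\ref{lem delta pour prod cart}, I would show that for any tree $t$,
\[
\Delta'(t) = \Delta_1(t) \times \Delta_2(t) =
\begin{cases}
\{(t,t)\} & \text{if } t \in \mathrm{SubTreeSet}(L_1) \cap \mathrm{SubTreeSet}(L_2),\\
\emptyset & \text{otherwise.}
\end{cases}
\]
Consequently the down-language of a state $(r_1, r_2) \in Q_1 \times Q_2$ is non-empty if and only if $r_1 = r_2$ and this common subtree lies in both $\mathrm{SubTreeSet}(L_1)$ and $\mathrm{SubTreeSet}(L_2)$; in that case $L_{(r,r)}(A') = \{r\}$. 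Hence the accessible part has state set $Q = \{(r,r) : r \in \mathrm{SubTreeSet}(L_1) \cap \mathrm{SubTreeSet}(L_2)\}$.

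Finally I would conclude by direct computation. By definition of $\nu'$ on the product, $\nu'((r,r)) = \nu_1(r)\times \nu_2(r) = \mathrm{SubTreeSeries}_{L_1}(r)\times \mathrm{SubTreeSeries}_{L_2}(r)$. Summing over the accessible part gives
\[
\nu(Q) = \sum_{r \in \mathrm{SubTreeSet}(L_1)\cap \mathrm{SubTreeSet}(L_2)} \mathrm{SubTreeSeries}_{L_1}(r)\times \mathrm{SubTreeSeries}_{L_2}(r),
\]
and extending the sum to all of $T_\Sigma$ adds only zero terms (since the product series vanishes outside the intersection), which matches the definition of $\mathrm{KerSeries}(L_1,L_2)$.

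The only real obstacle is the rigorous identification of the accessible part; once the sequentiality of $A_{L_i}$ and the diagonal form of $\Delta'$ are pinned down, the remainder is bookkeeping. Note also that restricting to the accessible part is harmless for the weight total, since inaccessible states never contribute to any $\mathbb{P}_{A'}(t)$ anyway, so one can equivalently argue $\sum_{t\in T_\Sigma}\mathbb{P}_{A'}(t) = \sum_{q \in Q'} \nu'(q)\,|L_q(A')|$ and use that each accessible $L_q(A')$ is a singleton.
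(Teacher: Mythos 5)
Your proposal is correct and follows essentially the same route as the paper's proof: both identify, via Lemma~\ref{lem delta r eq r} and Lemma~\ref{lem delta pour prod cart}, the accessible part of $A_{L_1}\times A_{L_2}$ as the diagonal states over $\mathrm{SubTreeSet}(L_1)\cap\mathrm{SubTreeSet}(L_2)$, compute $\nu((t,t))=\nu_1(t)\times\nu_2(t)$, and extend the sum to all of $T_\Sigma$ by observing that the added terms vanish. The only cosmetic difference is that you phrase the first step through the product-of-series proposition and down languages, whereas the paper unfolds $\mathrm{KerSeries}$ directly; the substance is identical.
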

  \begin{proof}    
    Let us set $A_{L_1}=(\Sigma,Q_1,\nu_1,\delta_1)$ and $A_{L_2}=(\Sigma,Q_2,\nu_2,\delta_2)$.
    
    By definition of the series product and from Corollary~\ref{cor lang subtree aut ok}, 
    
    \centerline{
      \begin{tabular}{l@{\ }l}
        $\mathrm{KerSeries}(L_1,L_2)$ & $=\sum_{t\in T_\Sigma}(\mathrm{SubTreeSeries}_{L_1}\times \mathrm{SubTreeSeries}_{L_2})(t)$\\
        & $=\sum_{t\in T_\Sigma}(\mathrm{SubTreeSeries}_{L_1}(t)\times \mathrm{SubTreeSeries}_{L_2}(t))$\\
        & $=\sum_{t\in T_\Sigma}(\nu_1(t)\times \nu_2(t))$.
      \end{tabular}
    }
    
    According to Lemma~\ref{lem delta r eq r}, for any tree $t$ in $T_\Sigma$, $\Delta_1(t)$ (resp. $\Delta_2(t)$) is either equal to $\{t\}$ if $t\in\mathrm{SubTreeSet}(L_1)$ ($t\in\mathrm{SubTreeSet}(L_2)$) or to $\emptyset$.
    Therefore, according to Lemma~\ref{lem delta pour prod cart}, $\Delta(t)$ is either equal to $\{(t,t)\}$ if $t\in\mathrm{SubTreeSet}(L_1)\cap \mathrm{SubTreeSet}(L_2)$, $\emptyset$ otherwise.
    Moreover, by definition of $\nu$, for any tree $t$, $\nu(t)=\nu(\Delta(t))=\nu_1(t)\times\nu_2(t)$.
    Furthermore, by definition of $Q$, $t$ is in $Q$ if and only if $t\in\mathrm{SubTreeSet}(L_1)\cap \mathrm{SubTreeSet}(L_2)$.
    Consequently, $\nu(Q)=\sum_{t\in\mathrm{SubTreeSet}(L_1)\cap \mathrm{SubTreeSet}(L_2)} \nu_1(t)\times\nu_2(t)$.
    Since for any tree $t$, if $t\notin \mathrm{SubTreeSet}(L_1)$ (resp. $t\notin \mathrm{SubTreeSet}(L_2)$), then $\nu_1(t)=0$ (resp. $\nu_2(t)=0$), it holds that
    
    \centerline{
      $\nu(Q)=\sum_{t\in T_\Sigma} \nu_1(t)\times\nu_2(t)=\mathrm{KerSeries}(L_1,L_2)$.
    }
    \cqfd
  \end{proof}
  
  Finally, by combining the elemental complexities,
  
  \begin{theorem}
    Let $\Sigma$ be an alphabet.
    Let $L_1$ and $L_2$ be two finite tree languages over $\Sigma$.
    Then 
    
    \centerline{
      $\mathrm{KerSeries}(L_1,L_2)$ can be computed in time $O(|L_1|+|L_2|+\mathrm{Card}(\mathrm{SubTreeSet}(L_1)\cap \mathrm{SubTreeSet}(L_2)))$.
    }    
  \end{theorem}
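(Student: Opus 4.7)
The plan is to assemble the algorithm from three previously established ingredients and combine their complexities. First, by Corollary~\ref{cor tps cons al}, I build $A_{L_1}$ in time $O(|L_1|)$ and $A_{L_2}$ in time $O(|L_2|)$. Second, I construct only the accessible part of the Cartesian product $A_{L_1}\times A_{L_2}$ rather than the full product automaton. Third, I sum the root weights over this accessible part and invoke Theorem~\ref{thm calc ker subtree} to conclude that the result equals $\mathrm{KerSeries}(L_1,L_2)$. The sum takes time linear in the number of accessible states, so the whole cost boils down to bounding the cost of the second step.

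For the accessible part, I would observe that $A_{L_1}$ and $A_{L_2}$ are sequential by Lemma~\ref{lem delta r eq r}, so for any tree $t$ in $T_\Sigma$, $\Delta_1(t)$ is $\{t\}$ if $t\in\mathrm{SubTreeSet}(L_1)$ and $\emptyset$ otherwise, and symmetrically for $\Delta_2$. By Lemma~\ref{lem delta pour prod cart}, the accessible states of $A_{L_1}\times A_{L_2}$ are exactly the pairs $(t,t)$ with $t\in\mathrm{SubTreeSet}(L_1)\cap\mathrm{SubTreeSet}(L_2)$. Hence the accessible part has cardinality $\mathrm{Card}(\mathrm{SubTreeSet}(L_1)\cap\mathrm{SubTreeSet}(L_2))$, which yields the right-most summand of the announced bound.

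The main obstacle will be to show that this accessible part can be enumerated in time proportional to its size, rather than in time proportional to $|Q_1|\cdot |Q_2|$ which a naive traversal of the product transition function would induce. To handle this, I would proceed bottom-up: start with the common constants in $\Sigma_0$, and for each symbol $f\in\Sigma_k$ maintain a hash table indexing the transitions of $A_{L_1}$ by the tuple of their child states; then, when a product state $(t_1,t_1),\ldots,(t_k,t_k)$ has been produced, look up in time $O(1)$ whether $A_{L_1}$ has a transition from $(t_1,\ldots,t_k)$ under $f$, and analogously for $A_{L_2}$. Sequentiality guarantees that each existing common subtree is discovered exactly once, so the total work spent on the product construction is $O(\mathrm{Card}(\mathrm{SubTreeSet}(L_1)\cap\mathrm{SubTreeSet}(L_2)))$, the root weight $\nu_1(t)\times\nu_2(t)$ being computed in constant time at each accessible state from the already-built $\nu_1$ and $\nu_2$.

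Adding the three contributions gives the claimed time $O(|L_1|+|L_2|+\mathrm{Card}(\mathrm{SubTreeSet}(L_1)\cap\mathrm{SubTreeSet}(L_2)))$, and the correctness of the final sum is exactly Theorem~\ref{thm calc ker subtree}.
\cqfd
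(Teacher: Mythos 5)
Your proposal is correct and follows essentially the same route as the paper: build $A_{L_1}$ and $A_{L_2}$ via Corollary~\ref{cor tps cons al}, identify the accessible part of $A_{L_1}\times A_{L_2}$ with the pairs $(t,t)$ for $t\in\mathrm{SubTreeSet}(L_1)\cap\mathrm{SubTreeSet}(L_2)$ via Lemmas~\ref{lem delta r eq r} and~\ref{lem delta pour prod cart}, and conclude with Theorem~\ref{thm calc ker subtree}. Your third paragraph, justifying that the accessible part can be enumerated bottom-up in time proportional to its size rather than to $|Q_1|\cdot|Q_2|$, is a detail the paper leaves implicit, and it is a welcome addition.
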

  \begin{proof}
    From Corollary~\ref{cor tps cons al}, $A_{L_1}=(\Sigma,Q_1,\nu_1,\delta_1)$ and $A_{L_2}=(\Sigma,Q_2,\nu_2,\delta_2)$ are constructed in time $O(|L_1|+|L_2|)$.
    From Lemma~\ref{lem delta pour prod cart}, the accessible part of $A_{L_1}\times A_{L_2}$ is composed of the set $S$ of states of the form $(t,t)$ with $t\in \mathrm{SubTreeSet}(L_1)\cap \mathrm{SubTreeSet}(L_2)$.  
    From Theorem~\ref{thm calc ker subtree}, $\mathrm{KerSeries}(L_1,L_2)$ is computed summing the root weight of the states in $S$.
    Therefore $\mathrm{KerSeries}(L_1,L_2)$ can be computed in time $O(|L_1|+|L_2|+\mathrm{Card}(\mathrm{SubTreeSet}(L_1)\cap \mathrm{SubTreeSet}(L_2)))$.
    \cqfd
  \end{proof} 
    
  \begin{example}
    Let us consider the trees $t_1=f(h(a),f(h(a),b))$, $t_2=f(h(a),h(b))$ and $t_3=f(f(b,h(b)),f(h(a),h(b)))$ defined in Example~\ref{ex subset tree}.
    The RWTA $A_{\{t_3\}}$ is represented in Figure~\ref{fig ex At3}.
    The RWTA $R=A_{\{t_1,t_2\}}\times A_{\{t_3\}}$ is represented in Figure~\ref{fig ex Alprod}.
    The sum of the root weights of $R$ is equal to $15$, that is $\mathrm{KerSeries}(\{t_1,t_2\},\{t_3\})$.
  \end{example} 
  
  \begin{minipage}{0.45\linewidth}
	  \begin{figure}[H]
	    \centerline{
		    \begin{tikzpicture}[node distance=2.5cm,bend angle=30,transform shape,scale=1]
		      \node[state,rounded rectangle] (a) {$a$};
		      \node[state, above of=a,rounded rectangle] (h) {$h(a)$};
		      \node[state, above of=b,rounded rectangle] (hb) {$h(b)$};		
		      \node[state, right of=a,rounded rectangle] (b) {$b$};	   
		      \node[state, above of=hb,rounded rectangle] (t2) {$f(h(a),h(b))$};
		      \node[state, above left of=hb,rounded rectangle] (f) {$f(b,h(b))$};	
		      \node[state, above left of=f,rounded rectangle] (t3) {$t_3$};	 
	        \draw (a) ++(-0.75cm,0cm) edge[<-] node {$1$} (a); 
	        \draw (h) ++(-1cm,0cm) edge[<-] node {$1$} (h); 
	        \draw (hb) ++(1cm,0cm) edge[<-] node {$2$} (hb); 
	        \draw (b) ++(0.75cm,0cm) edge[<-] node {$3$} (b); 
	        \draw (t2) ++(0cm,0.5cm) edge[<-] node {$1$} (t2);
	        \draw (f) ++(-1cm,0cm) edge[<-] node {$1$} (f);
	        \draw (t3) ++(0cm,0.5cm) edge[<-] node {$1$} (t3);  
	        \draw (a) ++(0cm,-1cm) node {$a$}  edge[->] (a);  
	        \draw (b) ++(0cm,-1cm) node {$b$}  edge[->] (b);  
	        \path[->]
	          (a) edge[->,below left] node {$h$} (h)
	          (b) edge[->,below left] node {$h$} (hb)
	        ;
	        \draw (hb) ++(0cm,1cm)  edge[->] node[above left,pos=0] {$f$} (t2) edge[shorten >=0pt] (h) edge[shorten >=0pt] (hb);
	        \draw (hb) ++(-1cm,-1cm)  edge[->] node[below left,pos=0] {$f$} (f) edge[shorten >=0pt] (hb) edge[shorten >=0pt] (b);
	        \draw (f) ++(-0cm,1cm)  edge[->] node[above right,pos=0] {$f$} (t3) edge[shorten >=0pt] (f) edge[shorten >=0pt] (t2);
	      \end{tikzpicture}
	    }
	    \caption{The RWTA $A_{\{t_3\}}$.}
	    \label{fig ex At3}
	  \end{figure}
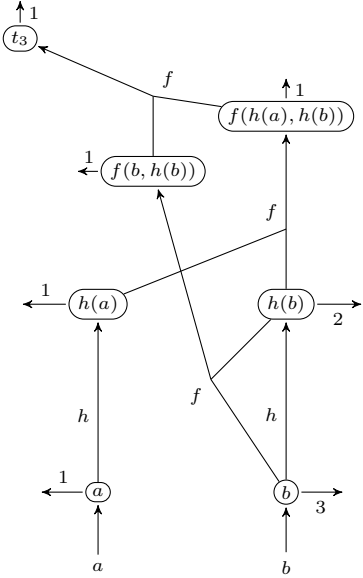
	\end{minipage}
	\hfill
	\begin{minipage}{0.45\linewidth}  
	  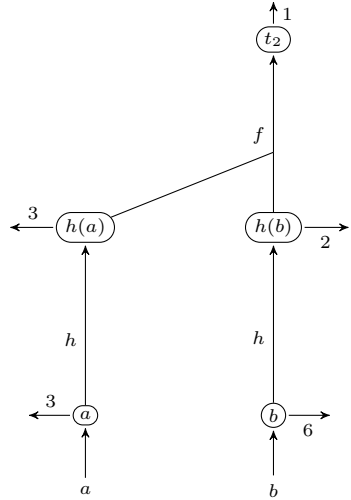
\begin{figure}[H]
	    \centerline{
		    \begin{tikzpicture}[node distance=2.5cm,bend angle=30,transform shape,scale=1]
		      \node[state,rounded rectangle] (a) {$a$};
		      \node[state, above of=a,rounded rectangle] (h) {$h(a)$};
		      \node[state, above of=b,rounded rectangle] (hb) {$h(b)$};		
		      \node[state, right of=a,rounded rectangle] (b) {$b$};	   
		      \node[state, above of=hb,rounded rectangle] (t2) {$t_2$};	 
	        \draw (a) ++(-0.75cm,0cm) edge[<-] node {$3$} (a); 
	        \draw (h) ++(-1cm,0cm) edge[<-] node {$3$} (h); 
	        \draw (hb) ++(1cm,0cm) edge[<-] node {$2$} (hb); 
	        \draw (b) ++(0.75cm,0cm) edge[<-] node {$6$} (b); ;
	        \draw (t2) ++(0cm,0.5cm) edge[<-] node {$1$} (t2);  
	        \draw (a) ++(0cm,-1cm) node {$a$}  edge[->] (a);  
	        \draw (b) ++(0cm,-1cm) node {$b$}  edge[->] (b);  
	        \path[->]
	          (a) edge[->,below left] node {$h$} (h)
	          (b) edge[->,below left] node {$h$} (hb)
	        ;
	        \draw (hb) ++(0cm,1cm)  edge[->] node[above left,pos=0] {$f$} (t2) edge[shorten >=0pt] (h) edge[shorten >=0pt] (hb);
	      \end{tikzpicture}
	    }
	    \caption{The RWTA $A_{\{t_1,t_2\}}\times A_{\{t_3\}}$.}
	    \label{fig ex Alprod}
	  \end{figure}
	\end{minipage}
	
	\section{Conclusion and Perspectives}
	
	  In this paper, we defined new weighted tree automata that are always sequentializable but that does not realize all the classical recognizable series.
	  We studied the different algebraic combinations of these automata (sum, products, regular operations) in order to determine their closures.
	  Once these definitions stated, we made use of these new structures in order to compute the subtree kernel of two finite tree series in an efficient way.
	  
	  Our technique can be applied to other computations. 
	  Indeed, other tree kernels exist, like the SST kernel. 
	  The next step of our work is to apply our constructions in order to efficiently compute these kernels.
	  However, this application is not so direct since it seems that the SST series may not be sequentializable w.r.t. a linear space complexity.
	  Hence we have to find different techniques, like extension of lookahead determinism~\cite{HW08} for example.
	  
	  Another perspective is related to the series realized by RWTAs. 
	  It is an open question to determine what family they exactly are.

	\bibliography{biblio}
\end{document}